\newcommand{\SingleSpace}{\edef\baselinestretch{1.00}\Large\normalsize}
\DeclareMathOperator*{\argmax}{arg\,max}
\newtheorem{theorem}{Theorem}
\newtheorem{corollary}{Corollary}
\newtheorem{lemma}{Lemma}
\newtheorem{definition}{Definition}
\newcommand{\specialcell}[2][c]{ \begin{tabular}[#1]{@{}c@{}}#2\end{tabular}}
\title{Fast Distance Sensitivity Oracle for Multiple Failures}
\author[1]{Golshan Golnari}
\author[2]{Zhi-Li Zhang}
\affil[1]{Electrical and Computer Engineering Department, University of Minnesota}
\affil[2]{Computer Science and Engineering Department, University of Minnesota}
\date{}                     
\begin{document}
\SingleSpace

\maketitle

\begin{abstract}
When a network is prone to failures, it is very expensive to compute the shortest paths every time from the scratch. Distance sensitivity oracle provides this privilege to find the new shortest paths faster and with lower cost by once pre-computing an oracle in advance. Although several efficient solutions are proposed in literature to support the single failure, few effort is done to devise an efficient method regarding the case of multiple failures. 

In this paper, we present a novel distance sensitivity oracle based on Markov Tensor Theory \cite{golnari2017markov} to support replacement path queries $(*,t,\mathcal{F})$ in general directed and weighted networks facing the set of failures $\mathcal{F}$. In contrast to the existing work, there is no limitation on maximum failure size supported by our oracle and there is no need to know the size of failure for constructing the oracle. The specifications of our oracle are: space size of $O(n^2)$, pre-process time of $O(n^{\omega})$, where $\omega$ is the exponent of fast matrix multiplication, and query time of $O(m)$ for answering to replacement path query of $(*,t,\mathcal{F})$ which computes the replacement (shortest) paths from all nodes to target $t$ at once. While the computation time for regular shortest path methods, such as Dijkstra's, is $O(m+nlogn)$ for each query after a failure, our algorithm can save a considerable computational time when the size of failure set $|\mathcal{F}|$ is $O(m^{1/\omega})$ or less and the network is sparse $O(m)<O(nlogn)$. 

\end{abstract}




\section{Introduction}

%
%
%
%

Due to the demand for developing more flexible algorithms to support the changes in the network, several problems have been posed under different names and objectives. In the {\em replacement paths} problem, the objective is to answer to query $(s,t,v)$ by computing the shortest replaced path efficiently from a fixed source node $s$ to a fixed target node $t$ for avoiding each of the nodes (or edges) located on the shortest path denoted by $v$. The more general forms of this problem are for multiple sources by answering queries $(*,t,v)$ and the all pairs replacement paths format which answers queries $(*,*,v)$ by efficiently finding the shortest replaced path for all pairs of source and target nodes, while avoiding an arbitrary failed node (or edge) $v$ and by constructing a {\em distance sensitivity oracle}. A more challenging problem is to find the replacement path in case of \textit{multiple failures} and answer to corresponding query $(s,t,\mathcal{F})$, which is still considered as an open problem. The main applications for distance sensitivity oracle are routing in failure-prone networks, Vickrey price problem, and finding $k$ shortest simple paths. When a network is prone to failures, it is very expensive to compute the shortest paths every time from the scratch. Distance sensitivity oracle provides this privilege to compute the new shortest paths faster and with lower cost. In extension, fault-tolerant routing protocol is a distributed solution which seeks for the shortest route avoiding the set of failures while trying to optimize the amount of
memory stored in the routing tables of the nodes (compact routing scheme) \cite{thorup2001compact}. In the Vickrey price problem from auction theory \cite{hershberger2001vickrey} the edges of a networks are each owned by a selfish agent and the objective is to determine the value of an edge according to how difficult it gets to route the information in the network if that edge fails. This can be done by benefiting from sensitivity distance oracle to compare the shortest path length before and after deleting the edge \cite{bernstein2009nearly}. This problem is closely related to find the most damaging or vital node (or edge) in the network \cite{corley1982most}. Moreover, $k$ shortest simple paths can be easily computed by running k executions of a replacement
paths algorithm \cite{eppstein1998finding}.

\textbf{Our Results. }In this paper, we propose a novel and simple-to-implement replacement path algorithm to support \textit{multiple} failures with arbitrary size and answer to $(*,t,\mathcal{F})$ queries efficiently. This algorithm is founded upon two developed concepts in Markov Tensor Theory \cite{Golnari2017MarkovTensor}: avoidance Markov chain and evaporation paradigm. The advantage of our algorithm is multiple folds: 
\begin{enumerate}
	\item By leveraging fast matrix multiplication (with exponent $\omega$, which is currently $\omega=2.376$ \cite{weimann2013replacement}), the sensitivity distance oracle with size $O(n^2)$ is constructed in $O(n^{\omega})$ time. This oracle answers to distance and path queries $(*,t,\mathcal{F})$ in only $O(m+f^{\omega})$ time, where $n$ is the number of nodes, $m$ is the number of edges, and $f$ is the size of failure. We consider the cases with failure size $O(m^{1/{\omega}})$, where the query time becomes even more efficient $O(m)$.
	\item In contrast to the existing work, there is no limitation on maximum failure size supported by our method. In addition, our sensitivity distance oracle does not depend on failure size and can be exploited for any size of the failure once is constructed.
	\item The algorithm supports the general directed networks with arbitrary weights.
	\item  The algorithm can be simply modified to support edge failures.  
	\item The algorithm can find alternative longer paths.	
\end{enumerate}

\textbf{\large{Related Work.}}
Sensitivity distance oracle algorithms have been studied vastly for supporting the single failure case. For weighted and directed networks, Demetrescu et al. \cite{demetrescu2008oracles} proposed an $O(n^2 \log n)$-size oracle which is constructed in $O(mn^2+n^3\log n)$ time and answers to shortest path length queries $(s,t,f)$ in $O(1)$ time. Bernstein and Karger \cite{bernstein2009nearly} improved the previous algorithm by lower construction time of $O(mn\log^2n + n^2\log^3 n)$ but space size of $O(n^2\log^2 n)$ and the same query time of $O(1)$. The same authors also presented a randomized algorithm \cite{bernstein2008improved} which is improved in construction time and storage size with a factor of $\log n$ compared to their deterministic algorithm and the same query time. Note that the query time for \textit{finding} the shortest path is $O(L)$ in all of these algorithms where $L$ is the length of the path. The approximate algorithm proposed by Khanna and Baswana \cite{khanna2010approximate} provides a lower storage requirement of $O(kn^{1+1/k}\frac{log^3 n}{\epsilon^4})$ for unweighted and undirected networks. This algorithm returns $(2k-1)(1+\epsilon)$-approximate distance query in $O(k)$ time for given an integer $k > 1$ and a fraction $\epsilon > 0$.  

As one of the first attempts to support more than one failure, Duan and Pettie \cite{duan2009dual} proposed a method for covering the dual-failures $(f = 2)$. Their method requires the
storage size of $O(n^2 \log^3 n)$ which is constructed in polynomial time. The query time for returning the length of shortest path is $O(\log n)$
and for returning the whole path is $O(L \log n)$. According to the authors, this method cannot be extended to cases with $f > 2$, since it becomes very complex and requires $O(n^f \log^3 n)$ of space. The other $f$-sensitivity distance oracle is a $(8k-2)(f+1)$-approximate algorithm suggested by Chechik et al. \cite{chechik2012f} to support more than two failures $f>2$ for undirected networks. The oracle is constructed in polynomial time and takes $O(fkn^{1+1/k}\log(nW))$ of space to answer distance queries. The query time for this algorithm is $O(|\mathcal{F}|.\log^2n.\log\log n.\log\log L)$, where $\mathcal{F}$ is the number of failures and $\mathcal{F}<f$, $W$ is the weight of heaviest edge, and $L$ is the longest distance in the network.
Weimann and Yuster \cite{weimann2013replacement} propose a randomized algorithm for constructing a sensitivity distance oracle with size of $\tilde{O}(n^{3-\alpha})$ given a trading-off parameter $0 < \alpha < 1$ and conditioned on the failure order being $|\mathcal{F}|=O(\frac{\log n}{\log\log n})$. Notation $\tilde{O}$ indicates that some $\log n$ has been dropped from the order. This algorithm was originally devised for integer-weighted graphs with edge weights chosen from $\{-W,...,+W\}$ \cite{weimann2010replacement} and then was extended to real-weighted graphs in a follow-up work \cite{weimann2013replacement}. For the case of integer weights, the construction time is $O(Wn^{1+\omega-\alpha})$ with query time of $\tilde{O}(n^{2-(1-\alpha)/|\mathcal{F}|})$, and the real weights case has been become possible by construction time of $O(n^{4-\alpha})$ and query time of $\tilde{O}(n^{2-2(1-\alpha)/|\mathcal{F}|})$. The authors take advantage of the fast matrix multiplication, with $\omega$ as the exponent, in their computations which is currently $\omega=2.376$ \cite{weimann2013replacement}.  
In the most recent work, Chechik et al. \cite{chechik20171+} proposed a range of $f$-sensitive distance oracles for undirected networks to answer to queries $(s,t,\mathcal{F})$ conditioned on the failure order being $|\mathcal{F}|=O(\frac{\log n}{\log\log n})$ with $(1+\epsilon)$ stretch. Among the six proposed oracles, excluding the two that are for unweighted networks, the best space requirement is $\tilde{O}(n^2)$ at the cost of $\tilde{O}(n^5)$ pre-process time, and the best pre-process time is $\tilde{O}(n^3)$ at the cost of $\tilde{O}(n^3)$ space requirement. The reviewed works are summarized in Table (\ref{tab:relatedwork}).

\begin{table}[H]
	\begin{center} \label{tab:relatedwork}
		\begin{tabular}{| p{1cm} || c | c | c | c | c | c | }
			\toprule
			\centering
			Ref. & Model & $f$-Max$\#$fails & Approx.& Space&Preprocess&\specialcell[]{Query time\\for $(*,t,\mathcal{F})$}\\ 
			\hline
			\centering
			\cite{duan2009dual}
			&
			\specialcell{directed and\\ weighted}
			& 
			2
			& 
			1
			& 
			$\tilde{O}(n^2)$
			& 
			$poly(n)$
			& 
			$O(nlog n)$
			\\
			\hline
			\centering
\cite{weimann2013replacement}
&
\specialcell{directed and\\ weighted}
& 
$O(\frac{\log n}{\log\log n})$
& 
1
& 
$\tilde{O}(n^{3-\alpha})$
& 
$\tilde{O}(n^{4-\alpha})$
& 
$\tilde{O}(n^{3-2(1-\alpha)/f})$
\\
\hline
			\centering
\cite{chechik20171+}
&
\specialcell{undirected\\and weighted}
& 
$O(\frac{\log n}{\log\log n})$
& 
$1+\epsilon$
& 
$\tilde{O}(n^3)$
& 
$\tilde{O}(mn^2+n^3log n)$
& 
$O(nf^4)$
\\
\hline
			\centering
This paper
&
\specialcell{directed and\\ weighted}
& 
n
& 
1
& 
$O(n^2)$
& 
$O(n^{\omega})$
& 
$O(m+f^{\omega})$
			\\ \bottomrule
		\end{tabular}
		\caption{Related work on distance sensitivity oracle for multiple failures} 
	\end{center}
\end{table}

\section{Method Overview}
The general idea of distance sensitivity oracle presented in this paper is founded on our developed Markov Tensor Theory \cite{golnari2017markov} which is a unified theoretical platform for solving network problems. We have extended the notation of fundamental matrix and hitting time/cost in Markov chain methods to more advanced Markov metrics which we call \textit{avoidance metrics}, such as avoidance fundamental matrix and avoidance hitting time/cost. We take one step further and illustrate the behavior of avoidance metrics in \textit{evaporation paradigm} and show how shortest path information can be nicely resulted from these metrics. 

In the next two sections, we first provide a preliminary review on Markov chain metrics and introduce the avoidance metrics. Then we show how to construct the evaporation paradigm $G_{\alpha}$ from network $G$. We demonstrate that once $\alpha$ goes to 0 in $G_{\alpha}$, the avoidance hitting cost converges to shortest path distance in $G$ and the edges with non-zero probabilities $\c{P}$ represent the edges on the shortest path tree to target $t$. We find the upper bound for $\alpha$ to make this convergence to shortest path happen. Then we illustrate how to devise the distance sensitivity oracle to find the (shortest) replacement paths after multiple failures. 

\section{Preliminaries} \label{sec:prelim}

Consider a weighted and directed network denoted by $G=(V,E,A)$, where $V$ is the set of nodes, $E$ is the set of edges, and $A$ is the adjacency matrix whose $a_{ij}$ entry indicates the distance from $i$ to $j$ if edge $e_{ij} \in E$, otherwise $a_{ij}=0$. A random walk over $G$ is modeled by a Markov chain, where the nodes of $G$ represent the states of the Markov chain and the Markov chain is fully described by its transition probability matrix: $P = D^{-1}A$, where $D=diag[d_i]$ is the diagonal matrix of $d_i$'s, and $d_i=\sum_{j}a_{ij}$ is referred to the (out-)degree of node $i$. In addition, the target nodes in $G$ can be represented as absorbing states in the Markov chain as once being hit, the random walk stops walking around. Throughout the paper, the words ``node" and ``state", and ``network" and ``Markov chain" are used interchangeably.

A Markov chain is called absorbing if it has at least one absorbing state that, once entered, cannot be left. The other states of an absorbing chain, that are not traps, are called non-absorbing or transient states. In an absorbing Markov chain, from each transient state at least one absorbing state should be reachable. Assuming that states are ordered in the way that set of transient
states $\mathcal{T}$ come first and set of absorbing states $\mathcal{A}$ come last, the transition matrix for an absorbing Markov chain takes the following block matrix form:
\begin{equation}
\label{eq:P}
P=\left[
\begin{array}{ c c }
P_{\mathcal{T}\mathcal{T}} & P_{\mathcal{T}\mathcal{A}} \\
0 & I_{\mathcal{A}\mathcal{A}}
\end{array} \right],
\end{equation}
where $I_{\mathcal{A}\mathcal{A}}$ is an $|\mathcal{A}|\times|\mathcal{A}|$ identity matrix and $P$ is row-stochastic.
The fundamental matrix of absorbing chain $P$ is defined as follows:
\begin{equation} \label{eq:N}
F^{\mathcal{A}}=(I-P_{\mathcal{T}\mathcal{T}})^{-1}, 
\end{equation}
where entry $F^{\mathcal{A}}_{sm}$ represents the expected number of passages through state $m$, starting from state $s$, and before absorption by any of absorbing states \cite{snell}. 
To be more clear about the target set $\mathcal{A}$, we show it as a superscript $F^{\mathcal{A}}$.

Expected absorption time, which is also known as (expected) hitting time or first passage time, is calculated as follows: 
\begin{equation} \label{eq:H}
H_s^{\mathcal{A}}=\sum_m F_{sm}^{\mathcal{A}},
\end{equation}
where $H_s^{\mathcal{A}}$ represents the expected number of steps before absorption by {\em any} of the absorbing states in $\mathcal{A}$ when the starting state is $s$.   

To generalize hitting time and account for cost of edges as well, Fouss et al. \cite{fouss2007random} introduced hitting cost metric. Hitting cost for a network with cost matrix $W$ is the average cost incurred by the
random walk when traversing the edges and before hitting the target node for the first time, which can be computed in a recursive form $U_s^{\mathcal{A}}=r_s + \sum_{m\in \mathcal{N}_{out}(s)}P_{sm}U_m^{\mathcal{A}}$ and $r_s$ is the expected out-going cost $r_s=\sum_i p_{si}w_{si}$. We show that the hitting cost can also be computed from fundamental matrix:
\begin{equation} \label{eq:U}
U_s^{\mathcal{A}}=\sum_m F_{sm}^{\mathcal{A}}r_m,
\end{equation}
Notice that hitting
time (\ref{eq:H}) is a special case of hitting cost (\ref{eq:U}) obtained when the cost of all edges are equal to 1, i.e. $w_{ij} = 1$ for all edges $e_{ij}$.

The absorption probability matrix $Q$ is defined as \cite{snell}:
\begin{equation} \label{eq:Q}
Q=FP_{\mathcal{T}\mathcal{A}}
\end{equation}
$Q$ is a $|\mathcal{T}|\times|\mathcal{A}|$ matrix whose $(s,t)$-th entry
is the probability of absorption by absorbing state $t$ when the chain starts from transient state $s$. We denote this entry by $Q_s^{\{t,\overline{\mathcal{F}}\}}$, where $\mathcal{F}=\mathcal{A}\setminus\{t\}$, to be more clear about the absorbing state which is hit first, i.e. $t$ in here, among the other absorbing states which are not touched at all, i.e. $\mathcal{F}$. Note that $\sum_{i\in\mathcal{A}} Q_s^{\{i,\overline{\mathcal{A}\setminus\{i\}}\}}=1$, since starting from any state $s$ it will end up being absorbed by one of the absorbing states eventually.

\section{Avoidance Metrics} \label{sec:avoid}

In this section, we introduce three new Markov chain metrics with modified properties and conditions:

\begin{definition}[Avoidance fundamental matrix]
	Avoidance fundamental matrix for source node $s$, middle node $m$, and target node $t$ conditioned on avoiding node $o$ is computed from classical fundamental matrix and absorption probabilities:
	\begin{equation} \label{eq:avoidanceF}
	F_{s,m}^{\{t,\overline{o}\}}=F_{s,m}^{\{t,o\}}.\frac{Q_m^{\{t,\overline{o}\}}}{Q_s^{\{t,\overline{o}\}}}
	\end{equation}
\end{definition}

\begin{definition}[Avoidance hitting time]
	Avoidance hitting time from $s$ to $t$ avoiding node $o$ is the conditional expectation over the number of steps required to hit $t$ for the first time when starting from $s$ and conditioned on avoiding $o$ on the way, and is obtained from the following equation:
	\begin{equation} \label{eq:avoidanceH}
	H_{s}^{\{t,\overline{o}\}}=\sum_m F_{s,m}^{\{t,o\}}.\frac{Q_m^{\{t,\overline{o}\}}}{Q_s^{\{t,\overline{o}\}}}
	\end{equation}
\end{definition}

\begin{definition}[Avoidance hitting cost]
	Avoidance hitting cost from $s$ to $t$ avoiding node $o$ is the conditional expectation over the cost of steps required to hit $t$ for the first time when starting from $s$ and conditioned on avoiding $o$ on the way, and is obtained from the following equation:
	\begin{equation} \label{eq:avoidanceU}
	U_{s}^{\{t,\overline{o}\}}=\sum_m (F_{s,m}^{\{t,o\}}.\frac{ Q_m^{\{t,\overline{o}\}}}{Q_s^{\{t,\overline{o}\}}})r_m^{\{t,\overline{o}\}}=\sum_m F_{s,m}^{\{t,\overline{o}\}}r_m^{\{t,\overline{o}\}},
	\end{equation}
	where $r_m^{\{t,\overline{o}\}}=\sum_i p_{mi}w_{mi}\frac{Q_i^{\{t,\overline{o}\}}}{Q_m^{\{t,\overline{o}\}}} $.
\end{definition}

In the following, we present a few lemmas which would be required for network analysis applications later on in this paper. 

\begin{lemma}[Incremental Computation of Fundamental Matrix] \label{Nlemma1}
	The fundamental matrix for target set of $\mathcal{S}_1\cup \mathcal{S}_2$ can be computed from the fundamental matrix for target set $\mathcal{S}_1$:
	\begin{equation} 
	F_{im}^{\mathcal{S}_1\cup \mathcal{S}_2}=F_{im}^{\mathcal{S}_1}-F_{i\mathcal{S}_2}^{\mathcal{S}_1}({F_{\mathcal{S}_2\mathcal{S}_2}^{\mathcal{S}_1}})^{-1}F_{\mathcal{S}_2m}^{\mathcal{S}_1},
	\end{equation}
	where the subscripts represent the rows and columns selected from the matrix respectively, e.g. $F_{i\mathcal{S}_2}^{\mathcal{S}_1}$ denotes the row $i$ and the columns corresponding to set $\mathcal{S}_2$ of the fundamental matrix $F^{\mathcal{S}_1}$.
\end{lemma}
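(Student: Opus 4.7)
The plan is to recognize the claimed identity as an instance of the dual Schur complement formula for block matrix inversion, applied to $I-P_{\mathcal{T}_1\mathcal{T}_1}$ where $\mathcal{T}_1$ is the transient set corresponding to the absorbing set $\mathcal{S}_1$.

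First I would set up the block decomposition. Let $\mathcal{T}_1 = V \setminus \mathcal{S}_1$ denote the original transient set and take $\mathcal{S}_2 \subseteq \mathcal{T}_1$ (so that the augmented absorbing set $\mathcal{S}_1 \cup \mathcal{S}_2$ properly extends $\mathcal{S}_1$). Writing $\mathcal{T}_2 := \mathcal{T}_1 \setminus \mathcal{S}_2$ for the new transient set and ordering rows/columns of $P_{\mathcal{T}_1\mathcal{T}_1}$ so that $\mathcal{T}_2$ comes first,
\begin{equation*}
I-P_{\mathcal{T}_1\mathcal{T}_1} \;=\; \begin{pmatrix} I-P_{\mathcal{T}_2\mathcal{T}_2} & -P_{\mathcal{T}_2\mathcal{S}_2} \\ -P_{\mathcal{S}_2\mathcal{T}_2} & I-P_{\mathcal{S}_2\mathcal{S}_2} \end{pmatrix}.
\end{equation*}
By definition~(\ref{eq:N}), $F^{\mathcal{S}_1} = (I-P_{\mathcal{T}_1\mathcal{T}_1})^{-1}$ is the inverse of the whole matrix, while $F^{\mathcal{S}_1 \cup \mathcal{S}_2} = (I-P_{\mathcal{T}_2\mathcal{T}_2})^{-1}$ is the inverse of only its top-left block. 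The four sub-blocks of $F^{\mathcal{S}_1}$ indexed according to the partition $\mathcal{T}_1 = \mathcal{T}_2 \cup \mathcal{S}_2$ are precisely the matrices $F^{\mathcal{S}_1}_{\mathcal{T}_2\mathcal{T}_2}, F^{\mathcal{S}_1}_{\mathcal{T}_2\mathcal{S}_2}, F^{\mathcal{S}_1}_{\mathcal{S}_2\mathcal{T}_2}, F^{\mathcal{S}_1}_{\mathcal{S}_2\mathcal{S}_2}$ that appear on the right-hand side of the claim.

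Next I would invoke the dual Schur complement identity: whenever $M = \begin{pmatrix} A & B \\ C & D \end{pmatrix}$ and its inverse $N = \begin{pmatrix} N_{11} & N_{12} \\ N_{21} & N_{22} \end{pmatrix}$ are such that $A$ and $N_{22}$ are nonsingular, one has $A^{-1} = N_{11} - N_{12} N_{22}^{-1} N_{21}$. This is a one-line consequence of the block identities in $MN = I$ (solve the second block-column for $B$, substitute into the first, and read off $A^{-1}$), and can be cited directly. Applying it with $M = I-P_{\mathcal{T}_1\mathcal{T}_1}$ and $N = F^{\mathcal{S}_1}$ yields
\begin{equation*}
F^{\mathcal{S}_1 \cup \mathcal{S}_2} \;=\; F^{\mathcal{S}_1}_{\mathcal{T}_2\mathcal{T}_2} - F^{\mathcal{S}_1}_{\mathcal{T}_2\mathcal{S}_2}\bigl(F^{\mathcal{S}_1}_{\mathcal{S}_2\mathcal{S}_2}\bigr)^{-1} F^{\mathcal{S}_1}_{\mathcal{S}_2\mathcal{T}_2},
\end{equation*}
and reading off the $(i,m)$ entry (for $i,m \in \mathcal{T}_2$, the only indices on which $F^{\mathcal{S}_1 \cup \mathcal{S}_2}$ is defined) gives the lemma.

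The one step requiring genuine argument — which I expect to be the main obstacle — is the invertibility of $F^{\mathcal{S}_1}_{\mathcal{S}_2\mathcal{S}_2}$, which is needed for the Schur complement step above. I would handle this by identifying that sub-block as itself the fundamental matrix of the absorbing chain obtained from $P$ by ``marginalizing out'' $\mathcal{T}_2$: collapse every excursion through $\mathcal{T}_2$ into a single lumped transition, producing a new absorbing chain on state space $\mathcal{S}_2 \cup \mathcal{S}_1$ whose transient set is $\mathcal{S}_2$ and whose fundamental matrix works out to $F^{\mathcal{S}_1}_{\mathcal{S}_2\mathcal{S}_2}$. Since the fundamental matrix of any absorbing chain is nonsingular by the standard Neumann-series argument (spectral radius of the transient sub-stochastic block is strictly less than one), the required invertibility follows and the Schur complement computation goes through, closing the proof.
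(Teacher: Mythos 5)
Your proposal is correct. Note that the paper itself does not prove this lemma --- it defers to the citation \cite{golnari2017random} --- so there is no in-paper argument to compare against; your Schur-complement derivation is the natural self-contained route, and every step checks out: with $\mathcal{T}_2=\mathcal{T}_1\setminus\mathcal{S}_2$ the matrix $F^{\mathcal{S}_1\cup\mathcal{S}_2}$ is indeed the inverse of the leading block of $I-P_{\mathcal{T}_1\mathcal{T}_1}$, and the identity $A^{-1}=N_{11}-N_{12}N_{22}^{-1}N_{21}$ follows from the $(1,2)$ and $(1,1)$ blocks of $MN=I$ exactly as you describe. The only place where you work harder than necessary is the invertibility of $F^{\mathcal{S}_1}_{\mathcal{S}_2\mathcal{S}_2}=N_{22}$: since both $M=I-P_{\mathcal{T}_1\mathcal{T}_1}$ and its leading block $A=I-P_{\mathcal{T}_2\mathcal{T}_2}$ are nonsingular (standard sub-stochasticity/Neumann-series argument, using that every state of $\mathcal{T}_2$ can still reach the absorbing set), the block-inverse formula already gives $N_{22}=(D-CA^{-1}B)^{-1}$, so $N_{22}$ is automatically invertible with explicit inverse $D-CA^{-1}B$; the censored-chain interpretation you sketch is a nice probabilistic gloss on the same fact but is not needed to close the proof. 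One hypothesis worth stating explicitly is $\mathcal{S}_2\subseteq\mathcal{T}_1$ with $i,m\in\mathcal{T}_2$, which you do, and which is exactly how the lemma is applied in Algorithm 2 with $\mathcal{S}_2=\mathcal{F}$.
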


\begin{lemma}[Absorption Probability and Normalized Fundamental Matrix] \label{Nlemma3}
	The absorption probability for absorbing set $\{j\}\cup \mathcal{S}$ can be found from the fundamental matrix for absorbing set $\mathcal{S}$:
	\begin{equation} 
	Q_i^{j,\overline{\mathcal{S}}}=\frac{F_{ij}^\mathcal{S}}{F_{jj}^\mathcal{S}}
	\end{equation}
\end{lemma}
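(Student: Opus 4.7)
The plan is to give a direct probabilistic argument based on the strong Markov property, exploiting the classical interpretation of the fundamental matrix as counting expected visits to transient states.

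First I would unpack the two quantities in question. By the definition recalled in (\ref{eq:N})--(\ref{eq:Q}), the entry $F_{ij}^{\mathcal{S}}$ is the expected number of visits to state $j$ (which is transient when the absorbing set is $\mathcal{S}$, since $j \notin \mathcal{S}$) starting from $i$, before the walk is absorbed by any state of $\mathcal{S}$. Meanwhile $Q_i^{j,\overline{\mathcal{S}}}$ is, by definition, the probability that a walk starting at $i$ and run in the chain where $\{j\}\cup\mathcal{S}$ is the absorbing set gets absorbed at $j$ rather than anywhere in $\mathcal{S}$. Since turning $j$ into an absorbing state only changes the dynamics \emph{after} the first visit to $j$, this absorption probability equals the probability, in the original chain with absorbing set $\mathcal{S}$, that the walk visits $j$ strictly before any state of $\mathcal{S}$.

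Next I would establish the key identity
\begin{equation}
F_{ij}^{\mathcal{S}} \;=\; Q_i^{j,\overline{\mathcal{S}}}\cdot F_{jj}^{\mathcal{S}},
\label{eq:strongMarkov}
\end{equation}
from which the lemma follows immediately by dividing by $F_{jj}^{\mathcal{S}}$ (which is positive because $j$ itself counts as a visit starting from $j$). To prove (\ref{eq:strongMarkov}), I would split the expected number of visits to $j$ into two cases depending on whether the walk ever hits $j$ before $\mathcal{S}$. If it does not, the contribution is zero. If it does, I would apply the strong Markov property at the first hitting time of $j$: conditioned on that event, the future of the walk is itself a copy of the chain started at $j$, and the expected number of visits to $j$ in that future (including the initial one) is exactly $F_{jj}^{\mathcal{S}}$. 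Weighting by the probability of the ``ever hits $j$ before $\mathcal{S}$'' event, which we identified above with $Q_i^{j,\overline{\mathcal{S}}}$, yields (\ref{eq:strongMarkov}).

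The main obstacle is really just bookkeeping: one must be careful that ``absorbing $j$'' in the augmented chain $\{j\}\cup\mathcal{S}$ corresponds precisely to ``reaching $j$ strictly before $\mathcal{S}$'' in the original chain with absorbing set $\mathcal{S}$, and one must invoke the strong Markov property at a stopping time (the first visit to $j$) rather than at a deterministic time. An alternative, purely algebraic route would be to start from $Q = F^{\mathcal{S}\cup\{j\}} P_{\mathcal{T}(\mathcal{S}\cup\{j\}),\{j\}\cup\mathcal{S}}$ and apply Lemma~\ref{Nlemma1} with $\mathcal{S}_1=\mathcal{S}$ and $\mathcal{S}_2=\{j\}$ to expand $F^{\mathcal{S}\cup\{j\}}$ in terms of $F^{\mathcal{S}}$; the scalar $F_{jj}^{\mathcal{S}}$ then appears naturally in the denominator as the $(F_{\mathcal{S}_2\mathcal{S}_2}^{\mathcal{S}_1})^{-1}$ factor, and collecting terms reproduces $F_{ij}^{\mathcal{S}}/F_{jj}^{\mathcal{S}}$. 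I would present the probabilistic proof as primary because it is shorter and highlights why the identity is natural, and mention the algebraic derivation as a cross-check.
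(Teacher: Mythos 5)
Your proof is correct. Note that the paper itself does not prove Lemma~\ref{Nlemma3} at all --- it simply defers to the reference \cite{golnari2017random} --- so there is no in-paper argument to compare against; your write-up is a valid self-contained substitute. The probabilistic argument is the standard one: the identification of $Q_i^{j,\overline{\mathcal{S}}}$ with the probability of reaching $j$ strictly before $\mathcal{S}$ is sound (making $j$ absorbing only alters the dynamics after the first visit to $j$), the strong Markov property at the first hitting time of $j$ gives $F_{ij}^{\mathcal{S}} = Q_i^{j,\overline{\mathcal{S}}}\, F_{jj}^{\mathcal{S}}$, and $F_{jj}^{\mathcal{S}} \geq 1 > 0$ justifies the division. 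The one case worth a sentence in a final write-up is $i=j$, where the formula correctly returns $1$ but the phrase ``visits $j$ strictly before any state of $\mathcal{S}$'' needs the convention that the starting state counts as an immediate visit. Your proposed algebraic cross-check also goes through: writing $Q_i^{j,\overline{\mathcal{S}}} = \sum_{m} F_{im}^{\mathcal{S}\cup\{j\}} P_{mj}$, expanding $F^{\mathcal{S}\cup\{j\}}$ via Lemma~\ref{Nlemma1} with $\mathcal{S}_2=\{j\}$, and using $F^{\mathcal{S}} P_{\mathcal{T}\mathcal{T}} = F^{\mathcal{S}} - I$ to collapse the sums does yield $F_{ij}^{\mathcal{S}}/F_{jj}^{\mathcal{S}}$, though the bookkeeping over the sum range $m \neq j$ is fiddlier than the probabilistic route and your choice to make the latter primary is the right one. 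The algebraic version has the side benefit of being consistent with how the paper actually uses the two lemmas together (e.g.\ in deriving Eq.~(\ref{eq:Favoid_Fo}) and in Algorithm~\ref{alg:replacepath}).
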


For proof of Lemmas (\ref{Nlemma1}) and (\ref{Nlemma3}), please refer to \cite{golnari2017random}.

\begin{corollary}
	Avoidance fundamental matrix $F_{sm}^{t,\overline{k}}$ can be written in terms of classical fundamental matrix $F^k$ by applying Lemmas (\ref{Nlemma1}) and (\ref{Nlemma3}) in the definition of avoidance fundamental matrix (\ref{eq:avoidanceF}): 
	\begin{equation} \label{eq:Favoid_Fo} F_{sm}^{t,\overline{k}}=F_{mt}^k(\frac{F_{sm}^k}{F_{st}^k}-\frac{F_{tm}^k}{F_{tt}^k})
	\end{equation}
\end{corollary}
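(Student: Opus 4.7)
The plan is to unfold the definition in equation (\ref{eq:avoidanceF}) with $o=k$ and rewrite every occurrence of a fundamental-matrix entry that has $\{t,k\}$ as its absorbing set, together with every absorption probability of the form $Q^{t,\overline{k}}$, in terms of the entries of $F^k$ alone. Once that is done the claimed identity falls out by a single algebraic simplification, so the proof is essentially a substitution exercise.

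First I would write
\begin{equation*}
F_{sm}^{\{t,\overline{k}\}} \;=\; F_{sm}^{\{t,k\}}\,\frac{Q_m^{\{t,\overline{k}\}}}{Q_s^{\{t,\overline{k}\}}}.
\end{equation*}
Next, I would apply Lemma \ref{Nlemma1} with $\mathcal{S}_1=\{k\}$ and $\mathcal{S}_2=\{t\}$ (so that the absorbing set $\mathcal{S}_1\cup\mathcal{S}_2=\{t,k\}$ matches the superscript above); since $\mathcal{S}_2$ is a singleton the inverse $(F_{\mathcal{S}_2\mathcal{S}_2}^{\mathcal{S}_1})^{-1}$ is just the scalar $1/F_{tt}^{k}$, and the lemma produces
\begin{equation*}
F_{sm}^{\{t,k\}} \;=\; F_{sm}^{k}-\frac{F_{st}^{k}\,F_{tm}^{k}}{F_{tt}^{k}}.
\end{equation*}

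Then I would use Lemma \ref{Nlemma3} with $j=t$ and $\mathcal{S}=\{k\}$ to rewrite the two absorption probabilities as $Q_s^{\{t,\overline{k}\}}=F_{st}^{k}/F_{tt}^{k}$ and $Q_m^{\{t,\overline{k}\}}=F_{mt}^{k}/F_{tt}^{k}$; the common factor $1/F_{tt}^{k}$ cancels in the ratio, leaving
\begin{equation*}
\frac{Q_m^{\{t,\overline{k}\}}}{Q_s^{\{t,\overline{k}\}}} \;=\; \frac{F_{mt}^{k}}{F_{st}^{k}}.
\end{equation*}
Plugging both substitutions back into the definition gives
\begin{equation*}
F_{sm}^{\{t,\overline{k}\}} \;=\; \Bigl(F_{sm}^{k}-\tfrac{F_{st}^{k}F_{tm}^{k}}{F_{tt}^{k}}\Bigr)\frac{F_{mt}^{k}}{F_{st}^{k}} \;=\; F_{mt}^{k}\Bigl(\tfrac{F_{sm}^{k}}{F_{st}^{k}}-\tfrac{F_{tm}^{k}}{F_{tt}^{k}}\Bigr),
\end{equation*}
which is exactly equation (\ref{eq:Favoid_Fo}).

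There is no real obstacle; the only care points are matching the absorbing-set conventions consistently (in particular, making sure the $\mathcal{S}_1,\mathcal{S}_2$ split in Lemma \ref{Nlemma1} is chosen so that $\mathcal{S}_2$ is a singleton and the inverse reduces to a scalar), and verifying that the $F_{tt}^{k}$ factors appearing in the numerator of $F_{sm}^{\{t,k\}}$ and in both absorption probabilities cancel cleanly. Since $F_{tt}^{k}$ is the expected number of returns to $t$ before hitting $k$ and is therefore nonzero in any absorbing chain where $t$ is transient relative to $k$, no division-by-zero issue arises and the identity is valid wherever the right-hand side is defined.
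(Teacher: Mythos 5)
Your proposal is correct and is exactly the computation the paper intends: the corollary is stated as a direct application of Lemma \ref{Nlemma1} (with $\mathcal{S}_1=\{k\}$, $\mathcal{S}_2=\{t\}$) and Lemma \ref{Nlemma3} (with $j=t$, $\mathcal{S}=\{k\}$) inside the definition (\ref{eq:avoidanceF}), and your substitutions and the final cancellation of $F_{tt}^k$ match the claimed identity. The paper provides no further proof detail beyond this, so there is nothing to compare against; your added remark that $F_{tt}^k\geq 1$ (so no division by zero) is a harmless bonus.
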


We present the applications of advanced random walk metrics in next three sections, where the new concept of evaporation paradigm is also being introduced and exploited in conjunction with advanced random walk metrics in the last two ones.

\section{Evaporation Paradigm}

Evaporation paradigm $G_{\alpha}$ is obtained by multiplying factor $\alpha^{w_{ij}}$ into transition probability $P_{ij}$ of $G$ for all edges $\forall e_{ij}\in E$, where $0<\alpha<1$, and adding one (imaginary) node to network, denoted by $o$, to which every other node $i$ is connected with transition probability $1-\sum_{j\in\mathcal{N}(i)} \alpha^{w_{ij}}P_{ij}$. 
\begin{equation} \label{eq:evapP}
P_{ij}(\alpha)=
\begin{cases}
P_{ij}\alpha^{w_{ij}}  & \text{if } i,j\neq o \\
1-\sum_{k\in\mathcal{N}(i)} \alpha^{w_{ik}}P_{ik}  & \text{if } i\neq o \text{ and } j=o \\
0  & \text{if } i= o \text{ and } j\neq o \\
1 & \text{if } i,j= o
\end{cases}
\end{equation}
Thus the new transition probability matrix $P(\alpha)$, belonging to $G_{\alpha}$, is an $(n+1)\times(n+1)$ row-stochastic matrix whose main principal $n\times n$ submatrix is $P_{11}(\alpha)=P\odot\alpha^{W}$, where $\odot$ is the element-wise product. Now with the new transition probability matrix $P(\alpha)$, we compute the avoidance metrics $U_{s}^{t,\overline{o}}(\alpha)$ and $F^{t,\overline{o}}_{sm}(\alpha)$ (from (\ref{eq:avoidanceU}) and (\ref{eq:avoidanceF}) respectively), and generate the routing continuum based on the following theorems.

\begin{theorem} [Routing Continuum: Path Distances]\label{thm:continuum} \label{thm:continuum_weighted}
	Consider weighted network $G$ with at least one path from node $s$ to node $t$. Varying parameter $\alpha$ from 0 to 1 in the avoidance hitting cost of the corresponding evaporating network $G_{\alpha}$ yields a continuum from the shortest-path distance to all-path distance (hitting cost distance) from node $s$ to node $t$ in $G$:
	
	\textit{a}) If $\alpha \rightarrow 0$, $U_{s}^{t,\overline{o}}(\alpha)$ converges to the shortest-path distance from $s$ to $t$ in $G$,
	
	\textit{b}) If $\alpha \rightarrow 1$, $U_{s}^{t,\overline{o}}(\alpha)$ converges to the hitting time distance from $s$ to $t$ in $G$; More precisely, $U_{s}^{t,\overline{o}}(\alpha)$ is exactly equal to the hitting cost distance for $\alpha=1$,
	
	\textit{c}) If $\alpha_1 < \alpha_2$, $U_{s}^{t,\overline{o}}(\alpha_1)\leq U_{s}^{t,\overline{o}}(\alpha_2)$.
	
\end{theorem}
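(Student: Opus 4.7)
The plan is to reduce all three claims to a single clean representation of the avoidance hitting cost as a Gibbs-style conditional expectation over walks from $s$ to $t$ in $G$. First I would observe that in $G_\alpha$ any first-passage walk $\pi=(s{=}v_0,v_1,\dots,v_k{=}t)$ that reaches $t$ before the evaporation node $o$ uses only non-evaporating transitions, so its probability in $G_\alpha$ factorizes as $P_\alpha(\pi)=P(\pi)\,\alpha^{w(\pi)}$, where $P(\pi)=\prod_i P_{v_iv_{i+1}}$ is its probability in $G$ and $w(\pi)=\sum_i w_{v_iv_{i+1}}$ is its total cost. Combining this factorization with the conditional-expectation interpretation behind (\ref{eq:avoidanceU}) yields
\begin{equation}\label{eq:pathsum}
U_{s}^{\{t,\overline{o}\}}(\alpha)\;=\;\frac{\sum_{\pi} w(\pi)\,P(\pi)\,\alpha^{w(\pi)}}{\sum_{\pi} P(\pi)\,\alpha^{w(\pi)}}\;=:\;\mathbb{E}_{\mu_\alpha}[w(\pi)],
\end{equation}
where the sums range over first-passage walks from $s$ to $t$ in $G$ and $\mu_\alpha(\pi)\propto P(\pi)\,\alpha^{w(\pi)}$ is the induced tilted measure on paths. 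Deriving (\ref{eq:pathsum}) from the matrix expressions of Section~\ref{sec:prelim} amounts to expanding $F^{\{t,o\}}(\alpha)$ as the Neumann walk series $\sum_{k\ge 0}(P_{\mathcal{T}\mathcal{T}}(\alpha))^k$, grouping walk contributions by first-passage path, and using Lemma~\ref{Nlemma3} to cancel the $Q$-quotients introduced by (\ref{eq:avoidanceF}).

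With (\ref{eq:pathsum}) in hand, parts (a) and (b) follow from limits of $\mu_\alpha$. For (a), let $d=\min_{\pi} w(\pi)$ be the shortest-path distance from $s$ to $t$, achieved by some simple-path walk in the sum. Factoring $\alpha^d$ out of both numerator and denominator of (\ref{eq:pathsum}) shows that every term with $w(\pi)>d$ is suppressed by a positive power of $\alpha$ relative to the minimizing terms, so $\mu_\alpha$ concentrates on the set of shortest walks and $U_{s}^{\{t,\overline{o}\}}(\alpha)\to d$ as $\alpha\to 0^+$. For (b), definition (\ref{eq:evapP}) gives $P_{io}(1)=1-\sum_k P_{ik}=0$ for every transient $i$, so at $\alpha=1$ the evaporation node is unreachable, $Q_{s}^{\{t,\overline{o}\}}(1)=1$, the conditioning is vacuous, and (\ref{eq:pathsum}) reduces exactly to the classical hitting cost of $G$; continuity of the rational function $U_{s}^{\{t,\overline{o}\}}(\alpha)$ in $\alpha$ on $(0,1]$ then yields the limit statement.

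For part (c), the monotonicity is essentially the variance identity for a one-parameter exponential family. Under the change of variable $\beta=-\ln\alpha$, $\mu_\alpha$ becomes the Gibbs measure $\mu_\beta(\pi)\propto P(\pi)\,e^{-\beta w(\pi)}$ on first-passage paths with reference measure $P(\pi)$ and energy $w(\pi)$. Differentiating the log partition function $\ln Z(\beta)=\ln\sum_\pi P(\pi)\,e^{-\beta w(\pi)}$ twice gives
\begin{equation}\label{eq:var}
\alpha\,\frac{d}{d\alpha}U_{s}^{\{t,\overline{o}\}}(\alpha)\;=\;-\frac{d}{d\beta}\,\mathbb{E}_{\mu_\beta}[w(\pi)]\;=\;\mathrm{Var}_{\mu_\beta}[w(\pi)]\;\ge\;0.
\end{equation}
Since $\alpha>0$ this proves $U_{s}^{\{t,\overline{o}\}}$ is nondecreasing on $(0,1]$; combined with the endpoint identifications from (a) and (b), the routing continuum claim follows.

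The step I expect to be the main obstacle is the bookkeeping behind (\ref{eq:pathsum}): translating the compact matrix expression (\ref{eq:avoidanceU}) into a clean path-sum requires a careful walk decomposition of the avoidance fundamental matrix and a justification that the series over (possibly cyclic) first-passage walks is absolutely convergent. Convergence for $\alpha\in(0,1)$ follows from the absorbing-chain structure of $G_\alpha$, but the limiting case $\alpha=1$ needs the standing assumption that the classical hitting cost to $t$ in $G$ is finite. Once (\ref{eq:pathsum}) is established, each of (a), (b), and (c) reduces to a short analytic argument.
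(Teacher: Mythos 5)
Your proposal is correct, and for the path-sum representation and parts (a) and (b) it follows essentially the same route as the paper: the appendix proof of Theorem~\ref{thm:continuum_weighted} likewise writes $U_s^{\{t,\overline{o}\}}(\alpha)$ as the tilted ratio $\sum_i l_i\alpha^{l_i}\mathrm{Pr}_{l_i}\big/\sum_i \alpha^{l_i}\mathrm{Pr}_{l_i}$ (walks grouped by total cost rather than enumerated individually, but this is exactly your measure $\mu_\alpha$), obtains (a) by dominance of the minimal-cost terms as $\alpha\to 0$, and obtains (b) by noting that at $\alpha=1$ the evaporation node is unreachable so the expression collapses to the classical hitting cost. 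Where you genuinely diverge is part (c): the paper cross-multiplies the two fractions for $\alpha_1<\alpha_2$ and compares the $(i,j)$ and $(j,i)$ terms pairwise, reducing everything to $\alpha_1^{l_j-l_i}<\alpha_2^{l_j-l_i}$ --- a discrete Chebyshev/rearrangement argument --- whereas you differentiate the log-partition function in $\beta=-\ln\alpha$ and invoke the exponential-family variance identity. These are two proofs of the same correlation inequality; the paper's is more elementary and sidesteps differentiating an infinite series, while yours is shorter, yields the quantitative statement $\alpha\,\frac{d}{d\alpha}U=\mathrm{Var}_{\mu_\beta}[w(\pi)]$ (hence strict monotonicity whenever more than one walk cost has positive probability), and generalizes without modification. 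Two housekeeping remarks: the step you flag as the main obstacle --- deriving the path-sum from the matrix definition (\ref{eq:avoidanceU}) via the Neumann expansion and cancellation of the $Q$-quotients --- is asserted without proof in the paper as well, so you are not missing anything it supplies; and for (c) you should note that $\sum_\pi w(\pi)^2P(\pi)\alpha^{w(\pi)}$ converges absolutely for $\alpha\in(0,1)$ (clear, since $w^2\alpha^{w}$ is bounded and $\sum_\pi P(\pi)\le 1$) so that differentiation under the sum is legitimate, while the finiteness of the classical hitting cost needed at $\alpha=1$ is, as you observe, an implicit standing assumption in the paper too.
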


The intuition behind Theorem (\ref{thm:continuum}) is that decreasing $\alpha$, the probability of evaporation in paths increases and when $\alpha$ goes to zero, the probability of longer paths become negligible compared to the probability of the shortest path, and only the shortest path survives. 
In addition, the non-zero entries of matrix $F^{t,\overline{o}}(\alpha)$ become the indicators of the involved nodes lying on the shortest path when $\alpha$ goes to zero, which is demonstrated in the next theorem.

\begin{theorem}[Routing Continuum: Node Flows] \label{thm:F_shp_weighted}
	Consider weighted network $G$ with at least one path from node $s$ to node $t$. For $\alpha \rightarrow 0$ in the corresponding evaporating network $G_{\alpha}$,  the entries of $s$-th row of the avoidance fundamental matrix, i.e. $F^{t,\overline{o}}_{sm}(\alpha)$ for $\forall m \in \mathcal{T}$, determine the following information regarding the shortest path from $s$ to $t$ in network $G$: 
	
	\textit{a}) If $\lim_{\alpha\rightarrow 0} F^{t,\overline{o}}_{sm}(\alpha)=0$, no shortest path from $s$ to $t$ passes through node $m$.
	
	\textit{b}) If $\lim_{\alpha\rightarrow 0} F^{t,\overline{o}}_{sm}(\alpha)=1$, node $m$ is located on all of the shortest paths from $s$ to $t$. 
	
	\textit{c}) If $0<\lim_{\alpha\rightarrow 0}F^{t,\overline{o}}_{sm}(\alpha)<1$, a fraction of the shortest paths from $s$ to $t$ pass through node $m$. 
	
	\textit{d}) As an immediate result of part (c), there exists more than one shortest path from $s$ to $t$ if and only if $\exists m, 0<\lim_{\alpha\rightarrow 0}F^{t,\overline{o}}_{sm}(\alpha)<1$.
\end{theorem}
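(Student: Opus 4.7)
The plan is to expand $F_{sm}^{\{t,\overline{o}\}}(\alpha)$ via its definition (\ref{eq:avoidanceF}) as a ratio of walk-generating functions over $G$, and then read off the $\alpha\to 0$ limit from the walks that survive evaporation.

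First I would set up the walk-expansion. From the evaporation rule (\ref{eq:evapP}), a walk $W=(v_0,\ldots,v_\ell)$ in $G$ is realized in $G_\alpha$ (without evaporating into $o$ along the way) with probability $\mathrm{prob}(W)\cdot\alpha^{\mathrm{cost}(W)}$, where $\mathrm{prob}(W)=\prod_i P_{v_iv_{i+1}}$ and $\mathrm{cost}(W)=\sum_i w_{v_iv_{i+1}}$. Since $o$ has no out-edges back to transient nodes, walks absorbed at $t$ in $G_\alpha$ correspond exactly to walks in $G$ that first hit $t$ at their endpoint. Writing $\mathcal{W}_{s\to t}$ for this set and $n_m(W)$ for the number of times $W$ visits $m$, I obtain
\begin{align*}
Q_s^{\{t,\overline{o}\}}(\alpha) &= \sum_{W\in\mathcal{W}_{s\to t}}\mathrm{prob}(W)\,\alpha^{\mathrm{cost}(W)},\\
F_{sm}^{\{t,o\}}(\alpha)\cdot Q_m^{\{t,\overline{o}\}}(\alpha) &= \sum_{W\in\mathcal{W}_{s\to t}}n_m(W)\,\mathrm{prob}(W)\,\alpha^{\mathrm{cost}(W)},
\end{align*}
the second line coming from the bijection between visits of $W$ to $m$ and decompositions $W=(s\to m)\cdot(m\to t)$ with the prefix avoiding $t$.

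Second I would take the $\alpha\to 0$ limit. Both series are power series in $\alpha$ with nonnegative coefficients; because edge weights are positive and $G$ is finite, the minimum edge weight $w_{\min}>0$ bounds the number of edges in any walk of bounded cost, so only finitely many walks lie below any cost threshold. Both series therefore have well-defined leading exponents. Let $d=d(s,t)$ and let $d_m\ge d$ denote the minimum cost of a walk in $\mathcal{W}_{s\to t}$ passing through $m$ (taking $d_m=+\infty$ if no such walk exists). Factoring $\alpha^d$ from numerator and denominator of (\ref{eq:avoidanceF}) yields
\begin{equation*}
\lim_{\alpha\to 0}F_{sm}^{\{t,\overline{o}\}}(\alpha)=\begin{cases}0,& d_m>d,\\[4pt]\displaystyle \frac{\sum_{W\text{ shortest},\,m\in W}\mathrm{prob}(W)}{\sum_{W\text{ shortest}}\mathrm{prob}(W)},& d_m=d,\end{cases}
\end{equation*}
where in the bottom case I used that shortest paths in a positively-weighted directed graph are simple, hence $n_m(W)\in\{0,1\}$.

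Third I would translate the closed-form limit into the four stated cases: (a) is exactly the regime $d_m>d$; (b) is when every shortest $s\to t$ path contains $m$ so that numerator equals denominator; (c) is when $m$ lies on some but not all shortest paths so that the ratio is strictly in $(0,1)$; and (d) is immediate, since more than one shortest $s\to t$ path exists iff two of them differ at some interior vertex $m$, which then falls into case (c). The one delicate step of the whole argument is the order-of-$\alpha$ control used in the second paragraph: one must verify that subleading contributions in both series are of strictly higher order in $\alpha$. This reduces to local finiteness of the realizable walk-cost multiset in $\mathbb{R}_{\ge 0}$, a direct consequence of $w_{\min}>0$ together with $|V|<\infty$, after which dominated convergence for the ratio goes through with no further effort.
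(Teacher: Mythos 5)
Your proof is correct and follows essentially the same route as the paper's: both expand $F_{sm}^{\{t,\overline{o}\}}(\alpha)$ as a ratio of walk-generating functions in powers of $\alpha$ and read the limit off the leading ($\alpha^{L_{st}}$) terms, which correspond exactly to the shortest paths. The only difference is bookkeeping --- the paper writes the numerator as a product of the $s\to m$ and $m\to t$ generating functions while you write it as a single sum over $s\to t$ walks weighted by the visit count $n_m(W)$, and these coincide via the visit-decomposition bijection you cite; your explicit remarks on leading-exponent control via $w_{\min}>0$ and on simplicity of shortest paths make the limit step slightly more careful than the paper's.
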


According to this theorem, computing the $s$-th row of the avoidance fundamental tensor for $\alpha \rightarrow 0$, we can find all of the nodes located on the shortest path(s) from $s$ to $t$. In addition, we can compute the shortest path length $L_{st}$ by summing up over this row (\ref{eq:H}).
In addition, we can find routing continuum edge probabilities (aka how to choose the next edge in a routing) from matrix $\c{P}$ based on the following theorem:

\begin{theorem}[Avoidance Paradigm to Classical Paradigm Transformation] \label{thm:Ptransformation}
	Network $G$ with avoiding node $o$ and target node $t$ can be transformed to network $\c{G}$ without node $o$ and the same target $t$ such that the avoidance metrics in the former network turn into the classical metrics in the latter network, i.e. $F_{sm}^{\{t,\overline{o}\}}=\c{F}_{sm}^{\{t\}}$, $H_s^{\{t,\overline{o}\}}=\c{H}_s^{\{t\}}$, and $U_s^{\{t,\overline{o}\}}=\c{U}_s^{\{t\}}$. The transformation function between transition matrix $\c{P}$ belonging to $\c{G}$ and $P$ belonging to $G$ is as follows:
	\begin{equation} \label{eq:Ptransformation}
	\c{P}_{ij} = P_{ij}\frac{Q_j^{t,\overline{o}}}{Q_i^{t,\overline{o}}}
	\end{equation}
\end{theorem}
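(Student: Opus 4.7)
The plan is to recognize $\c{P}$ as a Doob $h$-transform of the original chain $P$ with harmonic function $h(i)=Q_i^{\{t,\overline{o}\}}$. Under this interpretation, walking in $\c{G}$ is the same as walking in $G$ conditioned on the event ``absorption at $t$ without ever visiting $o$,'' which is precisely the event that defines the avoidance metrics. This viewpoint reduces the theorem to three short checks: $\c{P}$ is a bona fide row-stochastic matrix, its transient block is conjugate to $P_{\mathcal{T}\mathcal{T}}$ by a diagonal similarity, and the one-step expected costs of $\c{P}$ (under the original weights) equal the avoidance costs $r_m^{\{t,\overline{o}\}}$.

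First I would establish the harmonic identity $Q_i^{\{t,\overline{o}\}}=P_{it}+\sum_{j\in\mathcal{T}}P_{ij}Q_j^{\{t,\overline{o}\}}$ for every transient $i$, which is the standard one-step recursion derived from $Q=FP_{\mathcal{T}\mathcal{A}}$ when the absorbing set is $\{t,o\}$. Dividing by $Q_i^{\{t,\overline{o}\}}$ gives $\sum_{j\neq o}\c{P}_{ij}=1$ (using $Q_t^{\{t,\overline{o}\}}=1$), and $\c{P}_{io}=0$ falls out for free because $Q_o^{\{t,\overline{o}\}}=0$, so $o$ is isolated in $\c{G}$ and may be removed. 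This step needs the mild regularity assumption $Q_i^{\{t,\overline{o}\}}>0$ for every transient $i$, i.e.\ $t$ is reachable from every node while avoiding $o$; without it the transformation is not well-defined and the avoidance metrics themselves are degenerate.

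Next I would deduce the fundamental-matrix equality by one line of matrix algebra. Let $D=\mathrm{diag}(Q_i^{\{t,\overline{o}\}})_{i\in\mathcal{T}}$. The identity $\c{P}_{ij}=P_{ij}Q_j^{\{t,\overline{o}\}}/Q_i^{\{t,\overline{o}\}}$ says exactly that $\c{P}_{\mathcal{T}\mathcal{T}}=D^{-1}P_{\mathcal{T}\mathcal{T}}D$, hence
\[
\c{F}^{\{t\}}=(I-\c{P}_{\mathcal{T}\mathcal{T}})^{-1}=D^{-1}(I-P_{\mathcal{T}\mathcal{T}})^{-1}D=D^{-1}F^{\{t,o\}}D,
\]
which entrywise reads $\c{F}_{sm}^{\{t\}}=F_{sm}^{\{t,o\}}\,Q_m^{\{t,\overline{o}\}}/Q_s^{\{t,\overline{o}\}}$, matching the definition (\ref{eq:avoidanceF}) of $F_{sm}^{\{t,\overline{o}\}}$. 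The hitting-time equality $\c{H}_s^{\{t\}}=H_s^{\{t,\overline{o}\}}$ is then immediate by summing over $m$ and invoking (\ref{eq:avoidanceH}).

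Finally, for the hitting cost I would keep edge weights unchanged ($\c{w}_{ij}=w_{ij}$) and compute the one-step expected cost in $\c{G}$ directly: $\c{r}_m=\sum_i\c{P}_{mi}w_{mi}=\sum_i P_{mi}w_{mi}\,Q_i^{\{t,\overline{o}\}}/Q_m^{\{t,\overline{o}\}}=r_m^{\{t,\overline{o}\}}$, so $\c{U}_s^{\{t\}}=\sum_m\c{F}_{sm}^{\{t\}}\c{r}_m=\sum_m F_{sm}^{\{t,\overline{o}\}}r_m^{\{t,\overline{o}\}}=U_s^{\{t,\overline{o}\}}$, which is exactly (\ref{eq:avoidanceU}). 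I expect the only non-trivial step to be the row-stochastic check, since it is the one place where an actual probabilistic identity (the harmonic equation for absorption probabilities) is invoked; once that is in hand, the remainder of the theorem collapses to a diagonal conjugation exercise.
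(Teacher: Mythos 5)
Your proposal is correct and follows essentially the same route as the paper's own proof: row-stochasticity via the harmonic (one-step) equation for $Q$, the fundamental-matrix identity via the diagonal conjugation $\c{P}_{\mathcal{T}\mathcal{T}}=D^{-1}P_{\mathcal{T}\mathcal{T}}D$ (the paper expands the Neumann series, you invert $I-\c{P}_{\mathcal{T}\mathcal{T}}$ directly, which is the same computation), and the hitting time and cost by summation against $\c{r}_m=r_m^{\{t,\overline{o}\}}$. Your Doob $h$-transform framing and the explicit remarks that $Q_i^{\{t,\overline{o}\}}>0$ is needed and that $\c{P}_{io}=0$ isolates $o$ are welcome clarifications the paper leaves implicit, but they do not change the argument.
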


\begin{corollary}[Routing Continuum: Edge Probabilities]
	The probabilities assigned to edges for the routing strategy and each choice of $\alpha$ can be obtained from:
	\begin{equation} \label{eq:Ptransformation_alpha}
	\c{P}_{ij}(\alpha) = P_{ij}\alpha^{w_{ij}}\frac{Q_j^{t,\overline{o}}(\alpha)}{Q_i^{t,\overline{o}}(\alpha)}=P_{ij}\alpha^{w_{ij}}\frac{F^o_{jt}(\alpha)}{F^o_{it}(\alpha)},
	\end{equation}
\end{corollary}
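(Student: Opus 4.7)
The plan is to derive both equalities by specializing Theorem (\ref{thm:Ptransformation}) to the evaporation paradigm $G_\alpha$ and then rewriting the absorption probabilities via Lemma (\ref{Nlemma3}).

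First, I would observe that Theorem (\ref{thm:Ptransformation}) is stated for a generic absorbing Markov chain, so in particular it applies to the chain $P(\alpha)$ induced by $G_\alpha$. In this chain the imaginary sink $o$ is absorbing by construction (last line of (\ref{eq:evapP})), and $t$ is declared the additional absorbing state that we condition on hitting first. The transient-to-transient entries of $P(\alpha)$ are given by the top case of (\ref{eq:evapP}) as $P_{ij}(\alpha)=P_{ij}\alpha^{w_{ij}}$ for $i,j\neq o$. Substituting this directly into the transformation identity (\ref{eq:Ptransformation}), but now evaluated inside the evaporating chain, gives
\[
\c{P}_{ij}(\alpha)\;=\;P_{ij}(\alpha)\,\frac{Q_j^{t,\overline{o}}(\alpha)}{Q_i^{t,\overline{o}}(\alpha)}\;=\;P_{ij}\alpha^{w_{ij}}\,\frac{Q_j^{t,\overline{o}}(\alpha)}{Q_i^{t,\overline{o}}(\alpha)},
\]
which is the first equality of the corollary.

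For the second equality I would invoke Lemma (\ref{Nlemma3}) with $\mathcal{S}=\{o\}$ and the other absorbing state taken to be $t$, applied inside $G_\alpha$. This yields $Q_i^{t,\overline{o}}(\alpha)=F^{o}_{it}(\alpha)/F^{o}_{tt}(\alpha)$ and, identically, $Q_j^{t,\overline{o}}(\alpha)=F^{o}_{jt}(\alpha)/F^{o}_{tt}(\alpha)$. Plugging these into the ratio above causes the common denominator $F^{o}_{tt}(\alpha)$ to cancel, leaving the claimed expression
\[
\c{P}_{ij}(\alpha)\;=\;P_{ij}\alpha^{w_{ij}}\,\frac{F^{o}_{jt}(\alpha)}{F^{o}_{it}(\alpha)}.
\]

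I do not expect any genuine obstacle, since both ingredients are already proved earlier in the paper; the corollary is essentially a mechanical combination of them. The only point I would verify carefully is that Lemma (\ref{Nlemma3}) is applicable, i.e.\ that $\{o,t\}$ is a legitimate absorbing set in $P(\alpha)$ (absorbing chain, and every transient state can reach some absorbing state). The first holds by the definition (\ref{eq:evapP}), and the second holds because from every transient state $i$ the direct edge $i\!\to\!o$ has positive probability $1-\sum_{k}\alpha^{w_{ik}}P_{ik}>0$ whenever $0<\alpha<1$. With that sanity check done, the chain of substitutions above constitutes the full proof.
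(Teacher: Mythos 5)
Your proposal is correct and follows essentially the same route as the paper, which likewise obtains the first equality by applying Theorem (\ref{thm:Ptransformation}) to the evaporating chain $P(\alpha)$ and the second by invoking Lemma (\ref{Nlemma3}) with $\mathcal{S}=\{o\}$ so that the common factor $F^{o}_{tt}(\alpha)$ cancels in the ratio. Your added check that $\{o,t\}$ forms a valid absorbing set in $P(\alpha)$ for $0<\alpha<1$ is a sensible verification the paper leaves implicit, but it does not change the argument.
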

where $Q^{t,\overline{o}}(\alpha)$ is computed from (\ref{eq:Q}) and over evaporation transition probability matrix (\ref{eq:evapP}). The second equality is resulted from Lemma (\ref{Nlemma3}).
Algorithm (\ref{alg:shortestpath}) summarizes our method for computing these three metrics to find the continuum information for each choice of $\alpha$.

	\section{Shortest Path} 
	
	According to Theorem (\ref{thm:continuum_weighted}), once $\alpha$ goes to zero, the paths are pruned to \textit{shortest} ones and $U_s^{\{t,\overline{o}\}}$ converges to shortest path distance $L_{st}$. In this section, we prove some theorems for the shortest path case which clarifies the behavior of proposed method for small $\alpha$ and how it can be exploited to devise a novel method for finding the shortest paths.

	\subsection{Avoidance Hitting Cost Convergence Behavior and the Corresponding Error} \label{sec:error}
	
	In this part, we formulate error $\epsilon_{st}(\alpha)=U_s^{\{t,\bar{o}\}}(\alpha)-L_{st}$ in terms of $\alpha$ to study the convergence behavior of avoidance hitting cost to shortest path distance when $\alpha$ goes to 0. This formulation enables us later in this section to find a bound for $\alpha$ to make the error become smaller than $\delta/d$, where $\delta$ is the largest value by which all the edge weights are divisible and $d$ is the out-degree of nodes. Once $\epsilon<\delta/d$, the shortest path distance $L$ can be found by rounding down the avoidance hitting cost $U_s^{\{t,\bar{o}\}}(\alpha)$ to its closest value $k\delta$. We also show that $\epsilon<\delta/d$ is the sufficient condition to find the shortest path from the routing strategy in (\ref{thm:routingstrategy}). 
	
	Let $l_i$'s from countable set $\mathcal{C}$ be the length of walks from $s$ to $t$ such that $L_{st}=l_1<l_2<l_3<...$, and $\textrm{Pr}_{l_i}$'s be the corresponding probabilities (if there are more than one walk with the same length, the $\textrm{Pr}$ is the aggregated probability of the walks). Since $\delta$ is divisible by all walk lengths $l_i$, we can assume that any two consecutive walk lengths differ by $\delta$, i.e. $l_{i+1}=l_i+\delta$, otherwise we can always add a walk length with zero-probability, i.e. $\textrm{Pr}_{l_i}=0$.
	For unweighted networks $\delta=1$. In the evaporating network, every edge $e_{ij}$ is assigned a multiplicative factor of $\alpha^{w_{ij}}$ and so walks with length of $l_i$ have the total probability of $\alpha^{l_i}\textrm{Pr}_{l_i}$. Recall that $0\leq \textrm{Pr}_{l_i}\leq 1$ and $\sum_{l_i} \textrm{Pr}_{l_i}=1$. Then, the avoidance hitting cost can be decomposed into the shortest path distance plus an error term:

	\begin{eqnarray} \label{eq:H=L+res2}
	U_s^{\{t,\overline{o}\}}(\alpha)&=&\frac{\sum_{i=1} {l_i}\alpha^{l_i}\textrm{Pr}_{l_i}}{\sum_{i=1} \alpha^{l_i}\textrm{Pr}_{l_i}} \nonumber
	\\&=& \frac{L_{st}\sum_{i=1} \alpha^{{l_i}}\textrm{Pr}_{l_i}+\sum_{i=2} ({l_i}-l_{i-1})\sum_{k=i} \alpha^{l_k}\textrm{Pr}_{l_k}}{\sum_{i=1} \alpha^{{l_i}}\textrm{Pr}_{l_i}} \nonumber
	\\&=& L_{st}+\delta\frac{\sum_{i=2} \sum_{k=i} \alpha^{l_k}\textrm{Pr}_{l_k}}{\sum_{i=1} \alpha^{{l_i}}\textrm{Pr}_{l_i}} \nonumber
	\\&=& L_{st}+\delta\sum_{j=1}\alpha^{j\delta}\frac{\sum_{i=1} \alpha^{{l_i}}\textrm{Pr}_{l_{i+j}}}{\sum_{i=1} \alpha^{{l_i}}\textrm{Pr}_{l_i}} \nonumber
	\\&=& L_{st}+\delta\sum_{j=1}\alpha^{j\delta}\gamma_j \nonumber
	\\&=& L_{st} + \epsilon_{st}(\alpha), 
	\end{eqnarray}
	
	It can be seen that $\epsilon_{st}(\alpha)$ is a non-negative function of $\alpha$ and so always $U_s^{\{t,\overline{o}\}}(\alpha)\geq L_{st}$, meaning that avoidance hitting cost converges to shortest path distance from above. In the next part, we show that by putting $\epsilon_{st}(\alpha)<\delta/d_s$ and computing the inverse function, a bound for $\alpha$ can be found. 

	\subsection{Finding the Edges on the Shortest Path} \label{sec:findshp}
	
	Beside finding the shortest path distance by computing $U_s^{\{t,\overline{o}\}}(\alpha)$ for small enough $\alpha$, we need to find the path itself. In the following theorem, we show how to find the successor of each node in the shortest path tree and specify the edges located on the shortest path.  
	
	\begin{theorem}[Shortest Path Routing Strategy] \label{thm:routingstrategy}
		Let $\epsilon_{st}(\alpha)<\delta/d_s$, where $d_s$ is the number of out-going neighbors of $s$ and $\delta$ is the largest value by which all the edge weights are divisible. $s$'s out-going edge with highest probability, i.e. $\c{$P$}_{sj}=\max_m \text{\c{$P$}}_{sm}$, is located on the shortest path from $s$ to $t$. 
	\end{theorem}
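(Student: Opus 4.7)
The plan is to use the transformed classical chain $\c{G}$ from Theorem~3, in which the avoidance hitting cost coincides with a classical hitting cost that admits a one-step Bellman decomposition, and then to convert the small-error hypothesis $\epsilon_{st}(\alpha)<\delta/d_s$ into a strong concentration of probability mass on the first edges of shortest paths out of $s$.

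First I would apply the standard one-step recurrence for classical hitting cost in $\c{G}$,
\[\c{U}_s^t = \sum_m \c{P}_{sm}\,(w_{sm}+\c{U}_m^t),\]
substitute $\c{U}_x^t = L_{xt}+\epsilon_{xt}(\alpha)$ using Theorem~3 and the error decomposition of Section~6.1, and use $\sum_m\c{P}_{sm}=1$ to rewrite this as
\[\epsilon_{st}(\alpha) \;=\; \sum_m \c{P}_{sm}\bigl(w_{sm}+L_{mt}-L_{st}\bigr) \;+\; \sum_m \c{P}_{sm}\,\epsilon_{mt}(\alpha).\]
Partitioning the out-neighbors of $s$ into $S$ (first steps on some shortest $s$-$t$ path) and its complement $\bar S$, Bellman's optimality principle gives $w_{sm}+L_{mt}=L_{st}$ for $m\in S$, while $w_{sm}+L_{mt}\geq L_{st}+\delta$ for $m\in\bar S$, since every edge weight is a multiple of $\delta$. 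Dropping the nonnegative $\epsilon_{mt}(\alpha)$ terms (which were shown to satisfy $\epsilon\geq 0$ in Section~6.1) I obtain
\[\epsilon_{st}(\alpha) \;\geq\; \delta(1-p), \qquad p:=\sum_{m\in S}\c{P}_{sm},\]
so the hypothesis $\epsilon_{st}(\alpha)<\delta/d_s$ immediately yields $p>1-1/d_s$.

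The remaining step is a short counting argument. Suppose the $\argmax$ of $\c{P}_{sm}$ were attained at some $j'\in\bar S$. Then $\c{P}_{sj'}\leq 1-p$, and since by assumption $\c{P}_{sm}\leq \c{P}_{sj'}$ for every $m\in S$, summing over $S$ gives $p\leq|S|(1-p)$, i.e.\ $p\leq|S|/(|S|+1)$. Combined with $p>(d_s-1)/d_s$ this forces $|S|\geq d_s$, contradicting $|\bar S|\geq 1$. Hence any $\argmax$ neighbor $j$ lies in $S$, and the selected edge is on a shortest path from $s$ to $t$. The main obstacle (and the step I would double-check) is this final pigeonhole: the threshold $\delta/d_s$ is exactly the value that makes all strict inequalities align in the right direction, so one has to verify each passage from $\geq$ to $>$ carefully. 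A minor subtlety worth noting is that neighbors $m$ from which $t$ is unreachable while avoiding $o$ automatically have $\c{P}_{sm}=0$ and can be excluded from the count, so taking $d_s$ to be the full out-degree only slackens the bound in our favor.
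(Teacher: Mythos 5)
Your proposal is correct and follows essentially the same route as the paper's proof: the one-step hitting-cost recurrence in the transformed chain of Theorem \ref{thm:Ptransformation}, the substitution $U=L+\epsilon$ with the Bellman gap of at least $\delta$ for non-shortest-path neighbors, the lower bound $\epsilon_{st}\geq\delta(1-p)$, and hence $p>(d_s-1)/d_s$. The only (cosmetic) difference is the last step, where you run a pigeonhole contradiction on the location of the $\argmax$ while the paper directly observes that the maximum transformed edge probability over the shortest-path neighbors is at least $1/d_s$ whereas every probability outside that set is strictly below $1/d_s$; the two arguments are equivalent.
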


	Since, finding the shortest path is a recursive process, the whole path can be obtained by finding the successor of each node via the highest edge probability $\c{$P$}_{sj}=\max_m \text{\c{$P$}}_{sm}$ in each step, starting from $s$ and until getting to $t$. 
	The following algorithm summarizes the shortest path routing strategy based on the proposed method

	\begin{algorithm}[H]                       
		\caption{ \sc All pair shortest path }\label{alg:shortestpath}                          
		\begin{algorithmic}
			\small                    
			\STATE {\large\bfseries input:} Probability transition matrix $P$, weight matrix $W$, and $\alpha$ 
			\STATE {\large\bfseries output:} Shortest paths
			\STATE $P(\alpha)=\alpha^W\odot P$
			\STATE $F^o(\alpha)=(I-P(\alpha))^{-1}$
			\FOR {each target $t$}			
			\STATE $\forall e_{ij} \in E: \c{P}^t_{ij}(\alpha)=P_{ij}(\alpha)\frac{F^o_{jt}(\alpha)}{F^o_{it}(\alpha)}$
			\STATE $\forall i \in V: \text{successor$\{i\}$}=\argmax_j \c{P}^t_{ij}(\alpha)$
			\STATE $\text{shortest-path tree rooted at $t$}=\cup_{i\in V}e_{i,\text{successor$\{i\}$}}$
			\ENDFOR  
		\end{algorithmic}
	\end{algorithm}
	
	\subsection{Bound for $\alpha$} \label{sec:bound}
	
	We need a bound for $\alpha$ to make distance error $\epsilon$ smaller than $\delta/d_{max}$ and Theorem \ref{thm:routingstrategy} hold. The following theorem finds such a bound for $\alpha$:
	
	\begin{theorem}
		The distance error can shrink down enough if $\alpha$ follows the following bound:
		\begin{equation} \label{eq:alphaBound}
		\alpha\leq (\frac{1}{(d_{max})^{L_{max}+1}-d_{max}+1})^{1/\delta} \rightarrow \epsilon<\delta/d_{max},
		\end{equation}
		where $L_{max}=\max_{(s,t)} L_{st}$ is the diameter of the network and $d_{max}=\max_i d_i$ is the maximum out-degree in the network 
	\end{theorem}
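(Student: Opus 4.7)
The goal is to upper-bound $\epsilon_{st}(\alpha)$ by an explicit function of $\alpha$ (and of network parameters $d_{max}$, $L_{max}$, $\delta$) and then invert the inequality $\epsilon_{st}(\alpha) < \delta/d_{max}$ to isolate $\alpha$. I would start from the series expansion derived in Section~\ref{sec:error},
\begin{equation*}
\epsilon_{st}(\alpha) = \delta\sum_{j\geq 1}\alpha^{j\delta}\gamma_j,
\qquad
\gamma_j=\frac{\sum_{i\geq 1}\alpha^{l_i}\textrm{Pr}_{l_{i+j}}}{\sum_{i\geq 1}\alpha^{l_i}\textrm{Pr}_{l_i}},
\end{equation*}
and bound each $\gamma_j$ \emph{uniformly} in $j$, which collapses the tail into a single geometric series in $\alpha^\delta$.

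For the bound on $\gamma_j$, I would factor $\alpha^{l_1}$ out of both numerator and denominator (using $l_i=l_1+(i-1)\delta$). The denominator is then at least the $i{=}1$ term $\textrm{Pr}_{l_1}$. For the numerator, $\alpha^{(i-1)\delta}\leq 1$ together with the telescoping identity $\sum_{i\geq 1}\textrm{Pr}_{l_{i+j}}=1-\sum_{k=1}^{j}\textrm{Pr}_{l_k}\leq 1-\textrm{Pr}_{l_1}$ (using $\sum_k \textrm{Pr}_{l_k}=1$) yields the $j$-independent estimate
\begin{equation*}
\gamma_j \leq \frac{1-\textrm{Pr}_{l_1}}{\textrm{Pr}_{l_1}}.
\end{equation*}
Summing $\sum_{j\geq 1}\alpha^{j\delta}=\alpha^\delta/(1-\alpha^\delta)$ then gives $\epsilon_{st}(\alpha)/\delta \leq \tfrac{1-\textrm{Pr}_{l_1}}{\textrm{Pr}_{l_1}}\cdot\tfrac{\alpha^\delta}{1-\alpha^\delta}$.

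Next I need a lower bound on $\textrm{Pr}_{l_1}$. A shortest path from $s$ to $t$ traverses at most $L_{max}$ edges (identifying $L_{max}$ with the diameter in the natural integer/unit-weight scaling), and each Markov-chain transition $P_{ij}=a_{ij}/d_i$ satisfies $P_{ij}\geq 1/d_{max}$. Hence $\textrm{Pr}_{l_1}\geq d_{max}^{-L_{max}}$. Imposing $\epsilon_{st}(\alpha)/\delta<1/d_{max}$ and clearing denominators produces
\begin{equation*}
\alpha^\delta\bigl(d_{max}-(d_{max}-1)\textrm{Pr}_{l_1}\bigr) < \textrm{Pr}_{l_1},
\end{equation*}
i.e.\ $\alpha^\delta < \textrm{Pr}_{l_1}/(d_{max}-(d_{max}-1)\textrm{Pr}_{l_1})$. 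The right-hand side is monotonically increasing in $\textrm{Pr}_{l_1}$ on $(0,1]$, so its worst-case value is attained at the lower bound $\textrm{Pr}_{l_1}=d_{max}^{-L_{max}}$, giving exactly $\alpha^\delta \leq 1/(d_{max}^{L_{max}+1}-d_{max}+1)$, which is the claimed bound.

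The main obstacle, I expect, is the $\gamma_j$ step: one must retain the sharper bound $\gamma_j\leq(1-\textrm{Pr}_{l_1})/\textrm{Pr}_{l_1}$ by exploiting $\sum_k\textrm{Pr}_{l_k}=1$, rather than the softer $\gamma_j\leq 1/\textrm{Pr}_{l_1}$; the latter would give $1/(d_{max}^{L_{max}+1}+1)$ and miss the $-d_{max}$ term in the denominator. A secondary care point is the correspondence between the exponent $L_{max}$ and the edge-count of a shortest path, which is immediate in the unit-weight (or integer-weight with unit minimum) setting but deserves a brief justification in the general weighted case.
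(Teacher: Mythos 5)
Your proposal matches the paper's own proof essentially step for step: the same uniform bound $\gamma_j\leq(1-\textrm{Pr}_{l_1})/\textrm{Pr}_{l_1}$ obtained by factoring out $\alpha^{l_1}$ and using $\sum_k\textrm{Pr}_{l_k}=1$, the same geometric-series collapse to $\delta\,\alpha^\delta/(1-\alpha^\delta)$ times that factor, the same worst-case estimate $\textrm{Pr}_{l_1}\geq d_{max}^{-L_{max}}$, and the same algebraic inversion yielding $\alpha^\delta\leq 1/(d_{max}^{L_{max}+1}-d_{max}+1)$. Your explicit monotonicity remark justifying the worst-case substitution, and your caveat about identifying $L_{max}$ with the edge count of a shortest path in the weighted case, are small clarifications the paper leaves implicit, but the argument is the same.
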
  
	
	\begin{proof}
	We first find an upper bound for $\gamma$ to obtain an upper bound for distance error. Recall that $\epsilon=\delta\sum_{j=1}\alpha^{j\delta}\gamma_j$:
	
	\begin{eqnarray} 
	\gamma_j&=&\frac{\sum_{i=1} \alpha^{l_i}\textrm{Pr}_{l_{i+j}}}{\sum_{i=1} \alpha^{l_i}\textrm{Pr}_{l_i}}\leq \frac{\alpha^{l_1}\sum_{i=1} \textrm{Pr}_{l_{i+j}}}{\sum_{i=1} \alpha^{l_i}\textrm{Pr}_{l_i}} \leq \frac{\alpha^{l_1}(1-(\sum_{i=1}^{j} \textrm{Pr}_{l_i}))}{\alpha^{l_1}\textrm{Pr}_{l_1}} \nonumber
	\\&\leq& \frac{\alpha^{l_1}(1-\textrm{Pr}_{l_1})}{\alpha^{l_1}\textrm{Pr}_{l_1}}=\frac{1-\textrm{Pr}_{l_1}}{\textrm{Pr}_{l_1}}\leq \frac{1-(\frac{1}{d_{max}})^{L_{max}}}{(\frac{1}{d_{max}})^{L_{max}}} 
	\end{eqnarray}
	The last inequality is resulted from the worst case scenario in which the shortest path probability $\textrm{Pr}_{l_1}$ is composed of multiplication of least edge probabilities, i.e. $\frac{1}{d_{max}}$, and for the longest distance of network diameter. The upper bound for distance error $\epsilon_{st}$ is obtained as follows:
	\begin{equation} \label{eq:eps_upperbound}
	\epsilon_{st} \leq \delta\sum_{i=1}\alpha^{i\delta}(\frac{1-(\frac{1}{d_{max}})^{L_{max}}}{(\frac{1}{d_{max}})^{L_{max}}})=\delta\frac{\alpha^{\delta}}{1-\alpha^{\delta}}(\frac{1-(\frac{1}{d_{max}})^{L_{max}}}{(\frac{1}{d_{max}})^{L_{max}}})
	\end{equation}
	
	To guarantee that the distance error is smaller than $\delta/d_{max}$, we can make its upper bound (\ref{eq:eps_upperbound}) be lower than $\delta/d_{max}$, i.e. $\delta\frac{\alpha^{\delta}}{1-\alpha^{\delta}}(\frac{1-(\frac{1}{d_{max}})^{L_{max}}}{(\frac{1}{d_{max}})^{L_{max}}})<\frac{\delta}{d_{max}}$. Now, we can find a bound for $\alpha$ in terms of $\delta$, network diameter $L_{max}$, and maximum out-degree $d_{max}$ to have distance error $\epsilon$ smaller than $\delta/d_{max}$:

	\begin{equation}
	\alpha\leq (\frac{1}{(d_{max})^{L_{max}+1}-d_{max}+1})^{1/\delta}\approx (\frac{1}{d_{max}})^{(L_{max}+1)/{\delta}}
	\end{equation}
	
\end{proof}

\section{Replacement Path After Multiple Failures} \label{sec:theory}

\begin{theorem} \label{thm:replacepath}
	Assume set $\mathcal{F}$ of nodes have failed in weighted network $G$. If $\alpha\rightarrow 0$ in the corresponding evaporating network $G_{\alpha}$, the avoidance hitting cost $U_s^{t,\{\overline{\mathcal{F},o}\}}(\alpha)$ in $G_{\alpha}$ converges to shortest-path distance in $G$ where failure set $\mathcal{F}$ is discarded from the network.
	\begin{equation}
	\lim_{\alpha \rightarrow 0} U_s^{t,\{\overline{\mathcal{F},o}\}}(\alpha)=L_{st}^{\overline{\mathcal{F}}}
	\end{equation}
\end{theorem}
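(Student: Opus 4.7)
The plan is to reduce this theorem to the single-avoidance case already handled by Theorem \ref{thm:continuum_weighted} and equation (\ref{eq:H=L+res2}). The key observation is that placing the nodes of $\mathcal{F}$ into the absorbing set $\mathcal{A} = \{t\} \cup \mathcal{F} \cup \{o\}$ makes every walk that passes through any node of $\mathcal{F}$ get absorbed at $\mathcal{F}$ rather than continuing to $t$. Thus the conditional-expectation form of the avoidance hitting cost $U_s^{t,\{\overline{\mathcal{F},o}\}}(\alpha)$ only averages over walks from $s$ to $t$ in $G_{\alpha}$ that touch neither $\mathcal{F}$ nor $o$, and these are in one-to-one correspondence with walks from $s$ to $t$ in $G \setminus \mathcal{F}$, inheriting the $\alpha$-weighting from $G_{\alpha}$.

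First I would write $U_s^{t,\{\overline{\mathcal{F},o}\}}(\alpha)$ in walk-enumeration form, mirroring the first line of (\ref{eq:H=L+res2}): enumerate walks from $s$ to $t$ that avoid $\mathcal{F}$, group them by lengths $l_1^{\overline{\mathcal{F}}} < l_2^{\overline{\mathcal{F}}} < \cdots$ (spaced by $\delta$, padding with zero-probability lengths if necessary), and let $\textrm{Pr}_{l_i}^{\overline{\mathcal{F}}}$ denote the aggregate probability of length-$l_i$ walks in the un-evaporated chain restricted to $G \setminus \mathcal{F}$. The absorbing state $o$ again acts as the mechanism that weights each walk by $\alpha$ raised to its total edge-weight, while the extra absorbers $\mathcal{F}$ simply discard every walk that would touch them. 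This yields
\[
U_s^{t,\{\overline{\mathcal{F},o}\}}(\alpha) \;=\; \frac{\sum_{i} l_i^{\overline{\mathcal{F}}}\,\alpha^{l_i^{\overline{\mathcal{F}}}}\,\textrm{Pr}_{l_i}^{\overline{\mathcal{F}}}}{\sum_{i} \alpha^{l_i^{\overline{\mathcal{F}}}}\,\textrm{Pr}_{l_i}^{\overline{\mathcal{F}}}}.
\]

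Next I would apply verbatim the algebraic manipulation that converted (\ref{eq:H=L+res2}) into $L_{st} + \epsilon_{st}(\alpha)$, but with the leading length now equal to $L_{st}^{\overline{\mathcal{F}}}$, the shortest-path distance in $G$ with $\mathcal{F}$ removed. Every later term carries an extra factor of at least $\alpha^{\delta}$, so one obtains $U_s^{t,\{\overline{\mathcal{F},o}\}}(\alpha) = L_{st}^{\overline{\mathcal{F}}} + \delta \sum_{j\geq 1}\alpha^{j\delta}\gamma_j^{\overline{\mathcal{F}}}$, and the residual vanishes as $\alpha \to 0$. The same upper bound on $\gamma_j^{\overline{\mathcal{F}}}$ derived in Section \ref{sec:bound} applies, simply with $G \setminus \mathcal{F}$ replacing $G$, so an analogous bound for $\alpha$ can even be extracted as a by-product.

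The main obstacle is the bookkeeping at the very first step: I must verify that enlarging the absorbing set to $\{t\}\cup\mathcal{F}\cup\{o\}$ inside the definitions (\ref{eq:avoidanceF})--(\ref{eq:avoidanceU}) really restricts both the numerator and the denominator to the $\mathcal{F}$-avoiding walks with exactly the claimed weights, rather than introducing extra combinatorial factors from the $Q^{t,\overline{\mathcal{F},o}}$ ratios. This is where Lemmas \ref{Nlemma1} and \ref{Nlemma3} do the work: the absorption probability at $t$ in the enlarged chain is, by construction, the total probability of trajectories that reach $t$ strictly before any state of $\mathcal{F}\cup\{o\}$, which is precisely the $\mathcal{F}$-avoiding generating function evaluated at $\alpha$. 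A secondary technical point is that the statement implicitly requires at least one $s$-to-$t$ path in $G\setminus\mathcal{F}$; otherwise $Q_s^{t,\overline{\mathcal{F},o}}(\alpha)=0$ and the conditional expectation is undefined, so this should be stated as a hypothesis (analogous to the connectivity assumption in Theorem \ref{thm:continuum}).
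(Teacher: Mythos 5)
Your proposal is correct and takes essentially the same route as the paper: the paper's proof also enlarges the absorbing set to $\mathcal{A}_1=\{t\}\cup\mathcal{F}$ so that the avoidance hitting time becomes a ratio of $\alpha$-weighted sums over $\mathcal{F}$-avoiding walks (written there as powers $[P^{k-1}_{\mathcal{T}_1\mathcal{T}_1}P_{\mathcal{T}_1\mathcal{A}_1}]_{st}$ rather than your explicit walk enumeration), and then lets the leading term $k_1=L_{st}^{\overline{\mathcal{F}}}$ dominate as $\alpha\to 0$. If anything, you are slightly more thorough: the paper only details the unweighted case and defers the weighted case to Theorem \ref{thm:continuum_weighted}, whereas you carry the $\delta$-spaced decomposition of (\ref{eq:H=L+res2}) through directly and also flag the needed hypothesis that $G\setminus\mathcal{F}$ still contains an $s$-to-$t$ path.
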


Algorithm (\ref{alg:replacepath}) pre-computes the distance sensitivity oracle $F^o(\alpha)$ and answers to replacement shortest paths queries efficiently from all nodes to target $t$ while there are a set of failure nodes $\mathcal{F}$ in the network.

\begin{algorithm}[H]                       
	\caption{ \sc Replacement path algorithm for all sources to single target queries with multiple failures $(*,t,\mathcal{F})$ }\label{alg:replacepath}                          
	\begin{algorithmic}
		\small
		\STATE {\large\bfseries Input:} 
		\STATE Probability transition matrix $P$, weight matrix $W$, and $\alpha$                     
		\STATE {\large\bfseries Output:} 
		\STATE Shortest paths from all nodes to single target $t$ which do not pass any nodes in failure set $\mathcal{F}$
		\STATE {\large\bfseries Preprocess:}
		\STATE $P(\alpha)=\alpha^W\odot P$
		\STATE $F^o(\alpha)=(I-P(\alpha))^{-1}$
		\STATE {\large\bfseries Query:}  $(*,t,\mathcal{F})$ 
		\STATE {\large\bfseries Query response:}
		\STATE $M=(F_{\mathcal{F},\mathcal{F}}^o(\alpha))^{-1}$
		\STATE $\forall i \in V: F_{i,t}^{\mathcal{F},o}(\alpha)=F_{i,t}^o(\alpha)-F_{i,\mathcal{F}}^o(\alpha)MF_{\mathcal{F},t}^o(\alpha)$			
		\STATE $\forall e_{ij} \in E: \c{P}^t_{ij}(\alpha)=P_{ij}(\alpha)\frac{F^{\mathcal{F},o}_{jt}(\alpha)}{F^{\mathcal{F},o}_{it}(\alpha)}$
		\STATE $\forall i \in V: \text{successor$\{i\}$}=\argmax_j \c{P}^t_{ij}(\alpha)$
		\STATE $\text{Shortest-path tree rooted at $t$}=\cup_{i\in V}e_{i,\text{successor$\{i\}$}}$
	\end{algorithmic}
\end{algorithm}
where the second equation in Query response is resulted from Theorem (\ref{Nlemma1}) and the third one is a substitution of (\ref{Nlemma3}) in Theorem (\ref{thm:routingstrategy}).

\subsection{Preprocess time and space} \label{sec:complexity}
The purpose of preprocess part is to compute and store $F^o(\alpha)$ which can be used to answer replacement path queries very efficiently. The required space for storing this matrix is $n^2$ where $n$ is the number of nodes. Regarding the complexity time, the inverse computation is the main costly component with complexity time of $O(n^{\omega})$, where $\omega=2.376$, and is discussed in the following. Recall that the ReccomenderModule requires 20 network metrics as input who are all less complex than $O(n^{\omega})$.

\textbf{Matrix Inverse: }
The computational complexity of matrix multiplication of two $n\times n$ matrices is sub-qubic; according to Strassen algorithm \cite{strassen1969gaussian} the complexity is $O(n^{2.807})$ and later on it reduced even more to $O(n^{2.376})$ by Coppersmith-Winograd algorithm \cite{coppersmith1997rectangular}. Cormen et al. \cite{cormen2001introduction} proved that inversion is no harder than multiplication (Theorem 28.2). A divide and conquer algorithm that uses blockwise inversion to invert a matrix runs with the same time complexity as the matrix multiplication algorithm that is used internally.

\subsection{Query time}

For having a fast query time, we leverage the incremental computation in Theorem (\ref{Nlemma1}). Based on this theorem, only an $O(f^{\omega})$-computation is required to compute $F_{i,t}^{\mathcal{F},o}(\alpha)$ from precomputed matrix $F^o(\alpha)$ and for a given failure set $\mathcal{F}$ with size $f$. The other most costly component of query computations is computing the new probabilities $\c{P}^t_{ij}(\alpha)$ for all edges which takes $O(m)$ time and makes the overall computational time of $O(m+f^{\omega})$ for each query. By considering the cases with failure size $f<m^{1/{\omega}}$, the overall query time reduces to $O(m)$


\section*{Acknowledgement}
The research was supported in part by  US DoD DTRA grants HDTRA1-09-1-0050
and HDTRA1-14-1-0040, and ARO MURI Award W911NF-12-1-0385.

\newpage
\bibliographystyle{unsrt} 
\bibliography{main}

\newpage
\section{Appendix}
\subsection{Proof of Theorems} \label{sec:rplthmProof}

\begin{proof}[Proof of Theorem \ref{thm:continuum_weighted}]
	Let $l_i$'s from countable set $\mathcal{C}$ be the length of walks from $s$ to $t$ such that $L_{st}=l_1<l_2<l_3<...$, and $\textrm{Pr}_{l_i}$'s be the corresponding probabilities, where $\sum_{i=1} \textrm{Pr}_{l_i}=1$. The avoidance hitting cost (\ref{eq:general_avoidanceU}) in evaporation network finds the following form:
	\begin{equation} 
	U_s^{\{t,\overline{o}\}}(\alpha)=\frac{\sum_{i=1} l_i\alpha^{l_i}\textrm{Pr}_{l_i}}{\sum_{i=1} \alpha^{l_i}\textrm{Pr}_{l_i}},
	\end{equation}

	\textit{Proof of part (a)}
	
	When $\alpha \rightarrow 0$, the first term of numerator (and denominator) which is for $l_1=L_{st}$ dominates the subsequent terms and $U_s^{\{t,\overline{o}\}}(\alpha)$ converges to $\frac{ L_{st}\alpha^{L_{st}}\textrm{Pr}_{L_{st}}}{\alpha^{L_{st}}\textrm{Pr}_{L_{st}}}=L_{st}$.

	\textit{Proof of part (b)}
	
	For $\alpha = 1$, there is no evaporation and network $G_{\alpha}$ splits into two disconnected subgraphs: the original network $G$ with node $t$ as its only absorbing node, and one isolated node which is node $o$. Then $U_s^{\{t,\overline{o}\}}(\alpha)$ reduces to the regular hitting cost from $s$ to $t$ in the original network $G$:
	\begin{equation}
	U_s^{\{t,\overline{o}\}}(\alpha=1)=\frac{\sum_{i=1} l_i\textrm{Pr}_{l_i}}{\sum_{i=1} \textrm{Pr}_{l_i}}=\sum_{i=1} l_i\textrm{Pr}_{l_i}=U_s^t
	\end{equation}
	
	\textit{Proof of part (c)}
	
	We prove that if $\alpha_1 < \alpha_2$ then $U_s^{\{t,\overline{o}\}}(\alpha_1)\leq U_s^{\{t,\overline{o}\}}(\alpha_2)$, i.e.:
	\begin{equation} \label{ineq_a_weighted}
	\frac{\sum_{i=1} l_i\alpha_1^{l_i}\textrm{Pr}_{l_i}}{\sum_{i=1} \alpha_1^{l_i}\textrm{Pr}_{l_i}} \leq \frac{\sum_{i=1} l_i\alpha_2^{l_i}\textrm{Pr}_{l_i}}{\sum_{i=1} \alpha_2^{l_i}\textrm{Pr}_{l_i}}
	\end{equation}

	Cross-multiplying the fractions in (\ref{ineq_a_weighted}), we compare the corresponding terms from the left-hand-side and right-hand-side polynomials. Without loss of generality assume that $i\leq j$:
	\begin{eqnarray}
	(\alpha_2^{l_i} \textrm{Pr}_{l_i})({l_j}\alpha_1^{l_j} \textrm{Pr}_{l_j})+(\alpha_2^{l_j} \textrm{Pr}_{l_j})({l_i}\alpha_1^{l_i} \textrm{Pr}_{l_i}) &\leq&  (\alpha_2^{l_i} \textrm{Pr}_{l_i})({l_j}\alpha_1^{l_j} \textrm{Pr}_{l_j})+(\alpha_2^{l_j} \textrm{Pr}_{l_j})({l_i}\alpha_1^{l_i} \textrm{Pr}_{l_i})  \nonumber \\ 
	\Rightarrow \textrm{Pr}_{l_i}\textrm{Pr}_{l_j}({l_j}\alpha_2^{l_i}\alpha_1^{l_j}+{l_i}\alpha_2^{l_j}\alpha_1^{l_i}) &\leq&  \textrm{Pr}_{l_i}\textrm{Pr}_{l_j}({l_j}\alpha_1^{l_i}\alpha_2^{l_j}+{l_i}\alpha_1^{l_j}\alpha_2^{l_i}) \nonumber
	\end{eqnarray}
	Notice that for this inequality in two cases of: 1) $\textrm{Pr}_{l_i}$ or $\textrm{Pr}_{l_j}$ being zero, and 2) $i=j$ the equality holds; otherwise:
	\begin{equation}
	({l_j}-{l_i})\alpha_2^{l_i}\alpha_1^{l_j} < ({l_j}-{l_i})\alpha_1^{l_i}\alpha_2^{l_j} \qquad \Longrightarrow \quad \alpha_1^{{l_j}-{l_i}} < \alpha_2^{{l_j}-{l_i}},\nonumber
	\end{equation}
	where the last inequality is obviously correct.	
\end{proof}

\begin{proof} [Proof of Theorem \ref{thm:F_shp_weighted}]
	The avoidance fundamental matrix in evaporation network when the network is weighted finds the following form:
	\begin{equation}
	F_{sm}^{\{t,\overline{o}\}}(\alpha)=\frac{(\sum_{l_i=L_{sm}}\alpha^{l_i}\sum_{\zeta_j\in Z_{sm}(l_i)}\textrm{Pr}_{\zeta_j})\cdot(\sum_{l_i=L_{mt}}\alpha^{l_i}\sum_{\zeta_j\in Z_{mt}(l_i)}\textrm{Pr}_{\zeta_j})}{\sum_{l_i=L_{st}}\alpha^{l_i}\sum_{\zeta_j\in Z_{st}(l_i)}\textrm{Pr}_{\zeta_j}}
	\end{equation}
	
	When $\alpha \rightarrow 0$, the first terms with lowest exponent of $\alpha$ dominate the subsequent terms and the equation above reduces to:
	\begin{equation} \label{eq:limit_weighted}
	\lim_{\alpha \rightarrow 0} F_{sm}^{\{t,\overline{o}\}}(\alpha)=\lim_{\alpha \rightarrow 0}\frac{\alpha^{L_{sm}+L_{mt}}(\sum_{\zeta_j\in Z_{sm}(L_{sm})}\textrm{Pr}_{\zeta_j}).(\sum_{\zeta_j\in Z_{mt}(L_{mt})}\textrm{Pr}_{\zeta_j})}{\alpha^{L_{st}} \sum_{\zeta_j\in Z_{st}(L_{st})}\textrm{Pr}_{\zeta_j}}
	\end{equation}
	
	\textit{Proof of part (a)}
	\\If $m$ is not located on any shortest path from $s$ to $t$, then $\alpha^{L_{sm}+L_{mt}}>\alpha^{L_{st}}$ and the limit in Eq. (\ref{eq:limit_weighted}) converges to zero.
	
	\textit{Proof of part (b)\&(c)}
	\\If $m$ is located on at least one of the shortest paths from $s$ to $t$, then $\alpha^{L_{sm}+L_{mt}}=\alpha^{L_{st}}$ and the limit (\ref{eq:limit_weighted}) has non-zero value: $\lim_{\alpha \rightarrow 0} F_{sm}^{\{t,\overline{o}\}}(\alpha)>0$. On the other hand, we know that $\sum_{\zeta_j\in Z_{st}(L_{st})}\textrm{Pr}_{\zeta_j}\geq (\sum_{\zeta_j\in Z_{sm}(L_{sm})}\textrm{Pr}_{\zeta_j})\cdot(\sum_{\zeta_j\in Z_{mt}(L_{mt})}\textrm{Pr}_{\zeta_j})$ if $L_{sm}+L_{mt}=L_{st}$. In the case that $m$ is located on all of the shortest paths from $s$ to $t$, it should be in $L_{sm}$ distance from $s$ and in $L_{mt}$ distance to $t$ on all of these paths (otherwise we can find a shorter path by connecting two shorter pieces) and thus we have: $\sum_{\zeta_j\in Z_{st}(L_{st})}\textrm{Pr}_{\zeta_j}= (\sum_{\zeta_j\in Z_{sm}(L_{sm})}\textrm{Pr}_{\zeta_j})\cdot(\sum_{\zeta_j\in Z_{mt}(L_{mt})}\textrm{Pr}_{\zeta_j})$ which results to $\lim_{\alpha \rightarrow 0} F_{sm}^{\{t,\overline{o}\}}(\alpha)=1$. However, if $m$ is not located on all of the shortest paths from $s$ to $t$, then we have $\sum_{\zeta_j\in Z_{st}(L_{st})}\textrm{Pr}_{\zeta_j}> (\sum_{\zeta_j\in Z_{sm}(L_{sm})}\textrm{Pr}_{\zeta_j})\cdot(\sum_{\zeta_j\in Z_{mt}(L_{mt})}\textrm{Pr}_{\zeta_j})$ and so $\lim_{\alpha \rightarrow 0} F_{sm}^{\{t,\overline{o}\}}(\alpha)<1$.

\end{proof}

\begin{proof}[Proof of Theorem \ref{thm:Ptransformation}]
	We first prove that $\c{P}$ is a transition probability matrix, namely is row stochastic:
	\begin{equation}
	\sum_{j\in\mathcal{N}(i)} \c{P}_{ij} = \sum_{j\in\mathcal{N}(i)} P_{ij}\frac{Q_j^{T,\overline{o}}}{Q_i^{T,\overline{o}}} =  \frac{1}{Q_i^{T,\overline{o}}}\sum_{j\in\mathcal{N}(i)} P_{ij}Q_j^{T,\overline{o}}=\frac{Q_i^{T,\overline{o}}}{Q_i^{T,\overline{o}}}=1,
	\end{equation}
	where the third equality is resulted because of $Q$ is a harmonic function. Now we show that with the transformation in eq. (\ref{eq:Ptransformation}) these equalities hold:  $F_{sm}^{\{T,\overline{o}\}}=\c{F}_{sm}^{\{T\}}$, $H_s^{\{T,\overline{o}\}}=\c{H}_s^{\{T\}}$, and $U_s^{\{T,\overline{o}\}}=\c{U}_s^{\{T\}}$.
	\begin{eqnarray}
	\c{$F$}^{\{T\}}&=&\sum_{k=0} \c{P}_{\mathcal{TT}}^k=\sum_{k=0} (Diag(Q^{T,\overline{o}})^{-1}P_{\mathcal{TT}}Diag(Q^{T,\overline{o}}))^k \nonumber
	\\&=&\sum_{k=0} Diag(Q^{T,\overline{o}})^{-1}P_{\mathcal{TT}}^k Diag(Q^{T,\overline{o}}) \nonumber
	\\&=& Diag(Q^{T,\overline{o}})^{-1}(\sum_{k=0}P_{\mathcal{TT}}^k) Diag(Q^{T,\overline{o}}) \nonumber
	\\&=&Diag(Q^{T,\overline{o}})^{-1}F^{\{T,o\}} Diag(Q^{T,\overline{o}}) \nonumber
	\\&=&F^{\{T,\overline{o}\}} \nonumber
	\end{eqnarray}
	For the hitting times we have $\c{H}_s^{\{T\}}=\c{$F$}^{\{T\}}\textbf{1}$ and $H_s^{\{T,\overline{o}\}}=F^{\{T,\overline{o}\}} \textbf{1}$, so $H_s^{\{T,\overline{o}\}}=\c{H}_s^{\{T\}}$.
	The following relations also hold for hitting costs:
	\begin{eqnarray}
	U_s^{\{T,\overline{o}\}}&=&\sum_m F_{sm}^{\{T,\overline{o}\}}r_m^{\{T,\overline{o}\}} \nonumber
	\\&=&\sum_m F_{sm}^{\{T,\overline{o}\}}\sum_j \frac{Q_j^{\{T,\overline{o}\}}}{Q_m^{\{T,\overline{o}\}}}P_{mj}w_{mj} \nonumber
	\\&=&\sum_m F_{sm}^{\{T,\overline{o}\}}\sum_j \c{$P$}_{mj}w_{mj} \nonumber
	\\&=&\sum_m F_{sm}^{\{T,\overline{o}\}} \c{$r$}_m \nonumber
	\\&=&\sum_m \c{$F$}_{sm}^{\{T\}}\c{$r$}_m \nonumber
	\\&=&\c{$U$}_s^{\{T\}}, \nonumber
	\end{eqnarray}
	where the first and third equalities are based on (\ref{eq:avoidanceU}) and (\ref{eq:Ptransformation}) respectively. 
\end{proof}

\begin{proof} [Proof of Theorem \ref{thm:routingstrategy}]
	We first find an expression for distance error $\epsilon_{st}(\alpha)$ in terms of edge costs and probabilities. 
	According to Theorem (\ref{thm:Ptransformation}), any avoidance paradigm can be transformed to a corresponding classical paradigm, and we have: $	\c{$P$}_{ij}(\alpha) = P_{ij}(\alpha)\frac{Q_j^{t,\overline{o}}(\alpha)}{Q_i^{t,\overline{o}}(\alpha)}$, $	\c{$F$}_{sm}^{\{t\}}(\alpha)=F_{sm}^{\{t,\overline{o}\}}(\alpha)$, and $\c{$U$}_s^{\{t\}}(\alpha)=U_s^{\{t,\overline{o}\}}(\alpha)$.
	In the transformed classical paradigm, we can write the recursive function of hitting cost (\ref{sec:prelim}) and transform it back to corresponding avoidance metrics. (Just note that for the rest of the proof, we drop $\alpha$'s to avoid clutter and make the relations more readable):
	
	\begin{eqnarray}
	\c{$U$}_s^{\{t\}}&=&\text{\c{r}}_s + \sum_{m\in \mathcal{N}_{out}(s)}\c{P}_{sm}\c{U}_m^{\{t\}} \nonumber
	\\\rightarrow  U_s^{\{t,\overline{o}\}}&=&\text{\c{r}}_s + \sum_{m\in \mathcal{N}_{out}(s)}\c{P}_{sm}U_m^{\{t,\overline{o}\}} \nonumber
	\\&=& \sum_{m\in \mathcal{N}_{out}(s)}\c{P}_{sm}w_{sm}+ \sum_{m\in \mathcal{N}_{out}(s)}\c{P}_{sm}U_m^{\{t,\overline{o}\}} \nonumber
	\\&=&\c{P}_{sj}(w_{sj}+U_j^{\{t,\overline{o}\}})+\sum_{m\neq j}\c{P}_{sm}(U_m^{\{t,\overline{o}\}}+w_{sm}) \nonumber
	\\\rightarrow L_{st}+\epsilon_{st}&=& \c{P}_{sj}(w_{sj}+L_{jt}+\epsilon_{jt})+\sum_{m\neq j}\c{P}_{sm}(L_{mt}+\epsilon_{mt}+w_{sm}) \nonumber
	\end{eqnarray}
	
	Out-going edge set of node $s$ can be divided into two subset of $\mathcal{J}_e$ and $\mathcal{J}^C_e$, where $\mathcal{J}_e$ consists of the edges that are located on the shortest path from $s$ to $t$, and $\mathcal{J}^C_e$ is the complementary set. Let $\mathcal{J}_v$ be the corresponding out-going neighbors to $\mathcal{J}_e$ , i.e. $\mathcal{J}_e=\cup_{i\in\mathcal{J}_v} e_{si}$ and $|\mathcal{J}_e|=|\mathcal{J}_v|$. We prove that the edge with highest probability $\c{$P$}_{sj}=\max_m \text{\c{$P$}}_{sm}$ belong to $\mathcal{J}_e$. If $\mathcal{J}_e$ includes all of $s$'s out-going edges and $\mathcal{J}^C_e=\emptyset$, the proof is complete; otherwise, there exists at least one $s$'s out-going edge which is not located on the shortest path from $s$ to $t$, i.e. $|\mathcal{J}_e|\leq d_s-1$. Now, we show that the maximum edge probability in set $\mathcal{J}_e$ is higher than the maximum edge probability in $\mathcal{J}^C_e$:
	\begin{eqnarray}
	\rightarrow\epsilon_{st}&=& (\sum_{j\in \mathcal{J}_v}\c{P}_{sj}-1)L_{st}+\sum_{j\in \mathcal{J}_v}\c{P}_{sj}\epsilon_{jt}+\sum_{m\notin \mathcal{J}_v}\c{P}_{sm}(L_{mt}+\epsilon_{mt}+w_{sm}) \nonumber
	\\&\geq& (\sum_{j\in \mathcal{J}_v}\c{P}_{sj}-1)L_{st}+\sum_{m\notin \mathcal{J}_v}\c{P}_{sm}(L_{mt}+w_{sm}) \nonumber
	\\&\geq& (\sum_{j\in \mathcal{J}_v}\c{P}_{sj}-1)L_{st}+\sum_{m\notin \mathcal{J}_v}\c{P}_{sm}(L_{st}+\delta) \nonumber
	\\&=& (\sum_{j\in \mathcal{J}_v}\c{P}_{sj}-1)L_{st}+(1-\sum_{j\in \mathcal{J}_v}\c{P}_{sj})(L_{st}+\delta) \nonumber
	\\&=& (1-\sum_{j\in \mathcal{J}_v}\c{P}_{sj})\delta \label{eq:epsLowerBound}
	\end{eqnarray}
	Substituting the lower bound of $\epsilon_{st}$ (\ref{eq:epsLowerBound}) in the Theorem's assumption of $\epsilon_{st}<\delta/d_s$, the following result is obtained:
	\begin{equation}
	(1-\sum_{j\in \mathcal{J}_v}\c{P}_{sj})\delta<\delta/d_s
	\rightarrow\sum_{j\in \mathcal{J}_v}\c{P}_{sj}> \frac{d_s-1}{d_s}\rightarrow\sum_{j\in \mathcal{J}^C_v}\c{P}_{sj}< \frac{1}{d_s},
	\end{equation}
	which proves that the highest edge probability in $\mathcal{J}_e$ is at least equal to $\frac{1}{d_s}$, while the highest edge probability in $\mathcal{J}^C_e$ is strictly less than $\frac{1}{d_s}$.
	
\end{proof}

\begin{proof}[Proof of Theorem \ref{thm:replacepath}] \label{prf:repPath_unweighted}
	Let $G$ be an unweighted network and avoidance hitting time $H_s^{t,\{\overline{\mathcal{F},o}\}}(\alpha)$ is defined on the corresponding evaporation paradigm $G_{\alpha}$, and $\mathcal{F}$ is the set of failure nodes. We write the avoidance hitting time in terms of transition probabilities (\ref{eq:general_avoidanceH}):
	\begin{equation}
	H_s^{t,\{\overline{\mathcal{F},o}\}}(\alpha)=\frac{\sum_{k=k_1} k[{[P(\alpha)]}^{k-1}_{\mathcal{T}_2\mathcal{T}_2}{[P(\alpha)]}_{\mathcal{T}_2\mathcal{A}_2}]_{st}}{\sum_{k=k_1} [{[P(\alpha)]}^{k-1}_{\mathcal{T}_2\mathcal{T}_2}{[P(\alpha)]}_{\mathcal{T}_2\mathcal{A}_2}]_{st}}=\frac{\sum_{k=k_1} k\alpha^k[P^{k-1}_{\mathcal{T}_1\mathcal{T}_1}P_{\mathcal{T}_1\mathcal{A}_1}]_{st}}{\sum_{k=k_1} \alpha^k[P^{k-1}_{\mathcal{T}_1\mathcal{T}_1}P_{\mathcal{T}_1\mathcal{A}_1}]_{st}},
	\end{equation}
	where $P(\alpha)$ is the transition matrix of evaporation network and $P$ belongs to the original network. In the original network $G$ the target node $t$ as well as the failure set $\mathcal{F}$ form the absorbing set: $\mathcal{A}_1=\{t\}\cup \mathcal{F}$ and $\mathcal{T}_1=V\setminus\mathcal{A}_1$. In the evaporating network $G_{\alpha}$, the evaporation node $o$ is absorbing too: $\mathcal{A}_2=\{o\}\cup \mathcal{A}_1$ and $\mathcal{T}_2=V\setminus\mathcal{A}_2$. When $\alpha\rightarrow 0$
	\begin{equation}
	\lim_{\alpha \rightarrow 0} H_s^{t,\{\overline{\mathcal{F},o}\}}(\alpha)=\lim_{\alpha \rightarrow 0}\frac{\sum_{k=k_1} k\alpha^k[P^{k-1}_{\mathcal{T}_1\mathcal{T}_1}P_{\mathcal{T}_1\mathcal{A}_1}]_{st}}{\sum_{k=k_1} \alpha^k[P^{k-1}_{\mathcal{T}_1\mathcal{T}_1}P_{\mathcal{T}_1\mathcal{A}_1}]_{st}}=k_1,
	\end{equation}
	$k_1$ is the smallest number of steps to take from $s$ to reach $t$ in the transient part of $G$, which interprets the shortest path distance from $s$ to $t$ excluding the nodes in $\mathcal{F}$. 
	
	For the weighted network, the proof is straightforward following the same idea for the unweighted network as well as using Theorem (\ref{thm:continuum_weighted}).
\end{proof}

\subsection{Network Example}
This network example shows how varying $\alpha$ from 1 to 0, the edges are pruned to the ones located on the shortest path tree. This phenomena is described through main indicators of paths, i.e. $\c{P}^t(\alpha)$, $F^{t,\overline{o}}(\alpha)$, $U^{t,\overline{o}}(\alpha)$, which are computed for five different values of $\alpha$ and target $t=6$, and presented in Table (\ref{tbl:continuum}). The routing strategy in terms of edge probability $\c{P}^t(\alpha)$ for these five different values of $\alpha$ are depicted in Fig. (\ref{fig:continuum}b-f).
\begin{figure}[H]
	\vskip -0.1in
	\begin{center}
		\subfloat[directed weighted network example (weights on the edges)]{\label{fig:continuum_network}\includegraphics[width=0.3\textwidth]{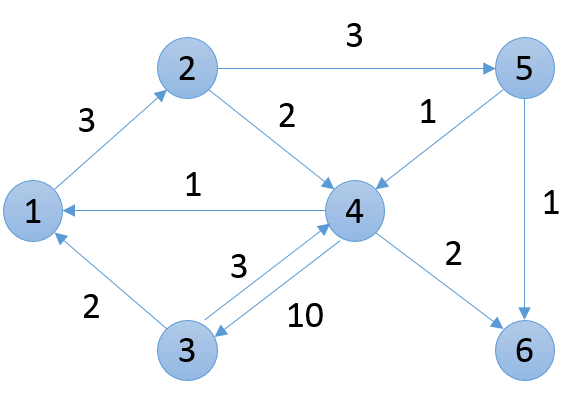}} ~
		\subfloat[$\alpha=0.0001$]{\label{fig:network_P_a00001}\includegraphics[width=0.3\textwidth]{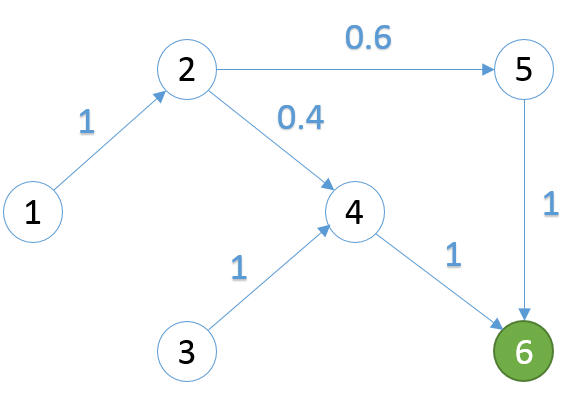}} ~
		\subfloat[$\alpha=0.3$]{\label{fig:network_P_a03}\includegraphics[width=0.3\textwidth]{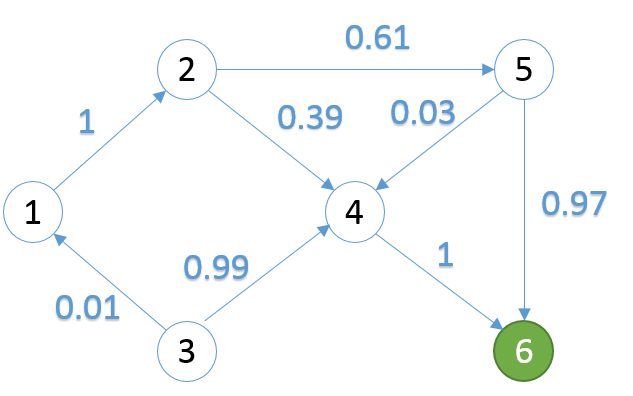}}
		\\
		\subfloat[$\alpha=0.6$]{\label{fig:network_P_a06}\includegraphics[width=0.3\textwidth]{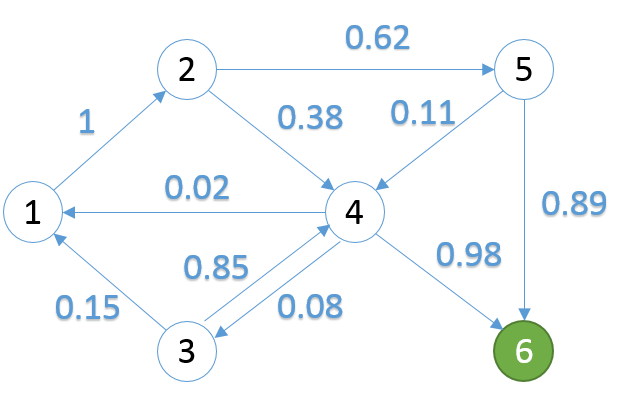}} ~
		\subfloat[$\alpha=0.9$]{\label{fig:network_P_a09}\includegraphics[width=0.3\textwidth]{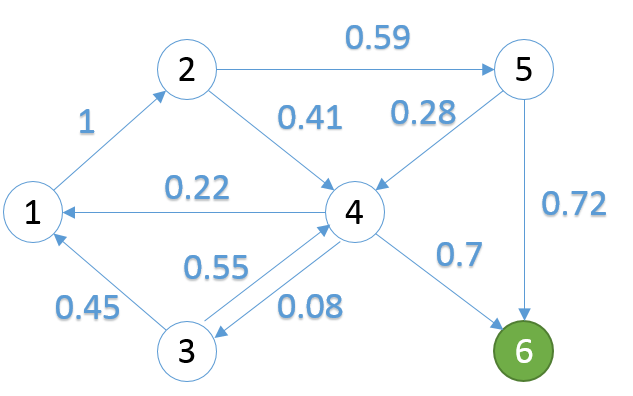}} ~
		\subfloat[$\alpha=1$]{\label{fig:network_P_a1}\includegraphics[width=0.3\textwidth]{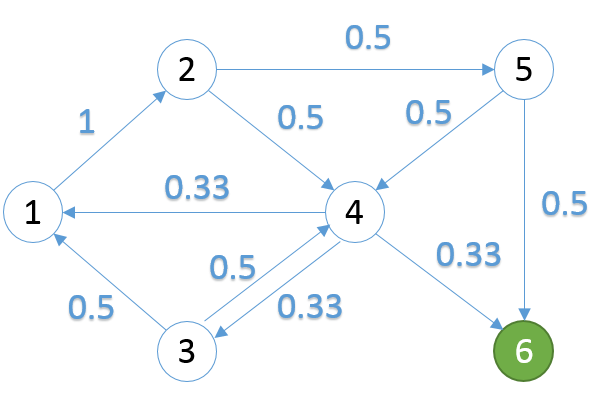}}
		\caption{\small Routing continuum: (b)-(f) show routing edge probabilities for network example in (a) and for different values of $\alpha$ which generate a continuum from shortest path to all path. The weights on the edges in (a) represent the cost of edges and in (b)-(f) indicates the routing edge probabilities. Target is node 6.}
		\label{fig:continuum}
	\end{center}
	\vskip -0.15in
\end{figure}

\begin{table}[H]
	\begin{center}
		\begin{tabular}{| p{1cm} || c | c | c | }
			\toprule
			\centering
			$\alpha$ & $\c{P}^{\{6\}}(\alpha)$ & $F^{\{6,\overline{o}\}}(\alpha)$ & $U^{\{6,\overline{o}\}}(\alpha)$ \\ 
			\cmidrule(r){1-1}\cmidrule(lr){2-2}\cmidrule(lr){3-3}\cmidrule(l){4-4}
			\vspace{1cm}
			\centering
			$0.0001$
			&
			\raisebox{-\totalheight}{\includegraphics[width=0.4\textwidth]{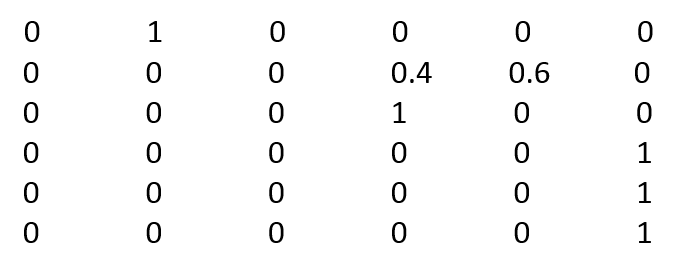}}
			& 
			\raisebox{-\totalheight}{\includegraphics[width=0.4\textwidth]{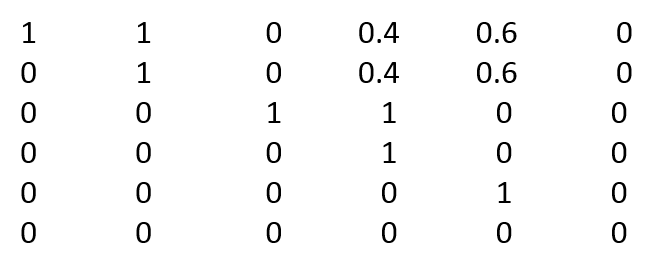}}
			& 
			\raisebox{-\totalheight}{\includegraphics[width=0.068\textwidth]{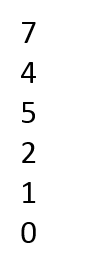}}
			\\
			\hline
			\vspace{1cm}
			\centering
			$0.3$
			&
			\raisebox{-\totalheight}{\includegraphics[width=0.4\textwidth]{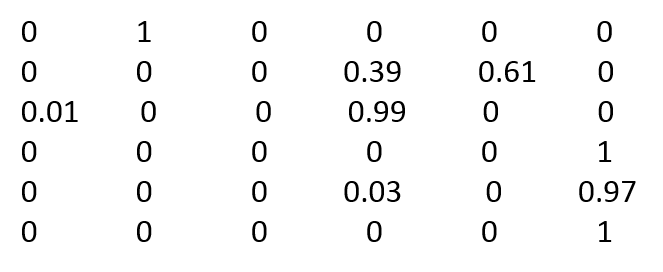}}
			& 
			\raisebox{-\totalheight}{\includegraphics[width=0.4\textwidth]{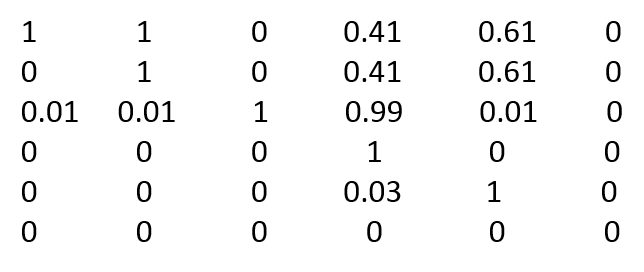}}
			& 
			\raisebox{-\totalheight}{\includegraphics[width=0.068\textwidth]{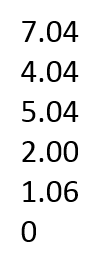}}
			\\
			\hline
			\vspace{1cm}
			\centering
			$0.6$
			&
			\raisebox{-\totalheight}{\includegraphics[width=0.4\textwidth]{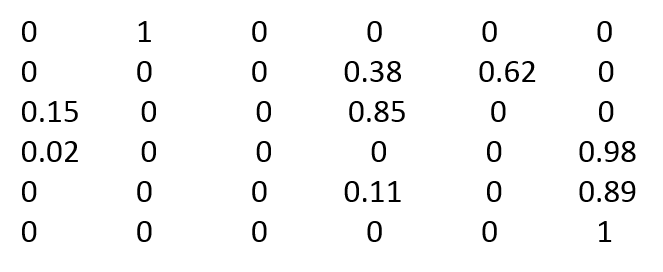}}
			& 
			\raisebox{-\totalheight}{\includegraphics[width=0.4\textwidth]{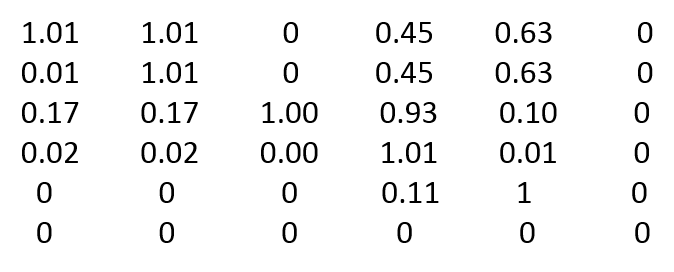}}
			& 
			\raisebox{-\totalheight}{\includegraphics[width=0.068\textwidth]{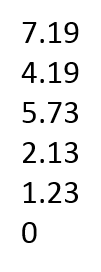}}
			\\
			\hline
			\vspace{1cm}
			\centering
			$0.9$
			&
			\raisebox{-\totalheight}{\includegraphics[width=0.4\textwidth]{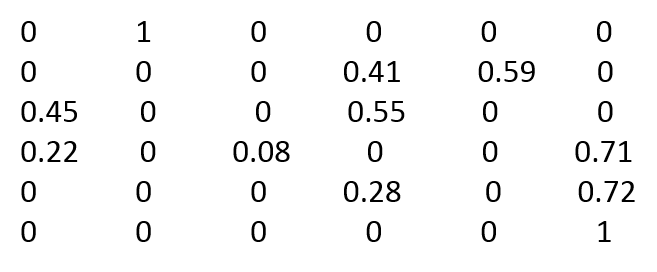}}
			& 
			\raisebox{-\totalheight}{\includegraphics[width=0.4\textwidth]{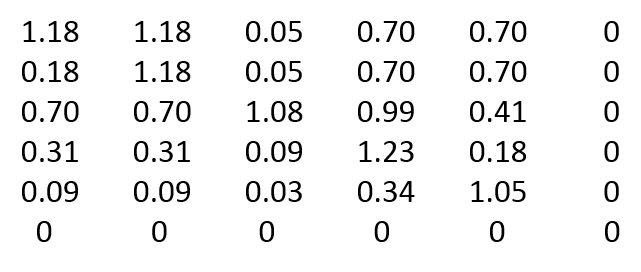}}
			& 
			\raisebox{-\totalheight}{\includegraphics[width=0.068\textwidth]{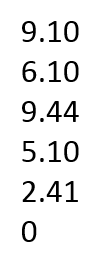}}
			\\
			\hline
			\vspace{1cm}
			\centering
			$1$
			&
			\raisebox{-\totalheight}{\includegraphics[width=0.4\textwidth]{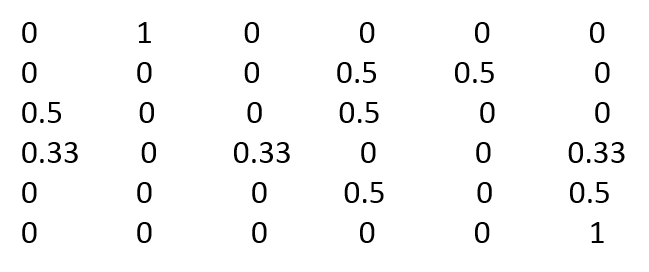}}
			& 
			\raisebox{-\totalheight}{\includegraphics[width=0.4\textwidth]{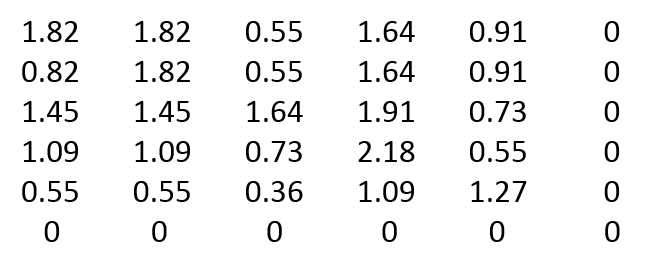}}
			& 
			\raisebox{-\totalheight}{\includegraphics[width=0.072\textwidth]{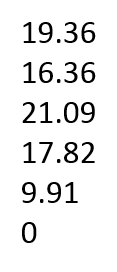}}
			\\ \bottomrule
		\end{tabular}
		\caption{Continuum path indicators for target node $t=6$ and different choices of $\alpha$ for network example in Fig. (\ref{fig:continuum_network})}
		\label{tbl:continuum}
	\end{center}
\end{table}

$U^{6,\overline{o}}(\alpha)$ indicates the vector of distances from all nodes to node 6. It can be seen for $\alpha$ close to zero ($\alpha=0.0001$ in Table (\ref{tbl:continuum})) these distances are the same as shortest path distances. For larger $\alpha$'s these distances grow monotonically until $\alpha=1$ where the classical hitting cost distances are obtained $U^{6,\overline{o}}(\alpha=1)=H^6$.

$F^{6,\overline{o}}(\alpha)$ represents the stochastic flow of nodes to target node 6. It is specially meaningful for two extreme cases of $\alpha=0.0001$ and $\alpha=1$; e.g. $F^{6,\overline{o}}_{1j}(\alpha=0.0001)$ indicates the stochastic portion of shortest paths from node 1 to 6 that pass through node j, which is 0.4 for $j=4$, 0.6 for $j=5$, 1 for $j=2$ which implies that all of the shortest paths from 1 to 6 pass through node 2, and 0 for $j=3$ indicating no shortest path from 1 to 6 pass through node 3. Existence of any value larger than 0 and smaller than 1 in $i$-th row of $F^{t,\overline{o}}(\alpha\rightarrow 0)$ indicate the existence of \textit{multiple} shortest paths from $i$ to $t$. 
\\For the other extreme $\alpha=1$, $F^{6,\overline{o}}(\alpha=1)$ is representing the expected visit times in regular random walks, i.e. classical fundamental matrix $F^6$.

$\c{P}^6(\alpha)$ is the matrix of edge probabilities for routing purposes. In other words, when a packet arrives at node $i$ it is forwarded over edge $e_{ij}$ with probability $\c{P}^6_{ij}(\alpha)$. Thus $\c{P}^6_{ij}(\alpha)$ indicates the usage portion of edge $e_{ij}$ for routing packets from $i$ to $t=6$ and for parameter $\alpha$. For $\alpha=0.0001$ (shortest path case), it can be seen that edges not belonging to shortest paths have zero probability (and so not shown in the figure), and the non-zero-probability edges form the shortest DAG from all the nodes to target node 6 (Fig. (\ref{fig:network_P_a00001})).


%
%
%
\end{document}